\DeclareMathOperator{\bsc}{\mathrm{BSC}}
\DeclareMathOperator*{\E}{\mathbb{E}}
\DeclareMathOperator*{\argmax}{argmax}
\DeclareMathOperator{\1}{\mathbf 1}
\newcommand{\B}{\ensuremath{\mathcal B}}
\newcommand{\W}{\ensuremath{\mathcal W}}
\newcommand{\Y}{\ensuremath{\mathcal Y}}
\newcommand{\poly}{\ensuremath{\mathrm{poly}}}
\newcommand{\polylog}{\ensuremath{\mathrm{polylog}}}
\newcommand{\wni}{\ensuremath{W^{(i)}_n}}
\newcommand{\wnip}{\ensuremath{W_{n}^{(\lfloor i /2 \rfloor)}}}
\newcommand{\wln}{\ensuremath{W^{(\lfloor i/2 \rfloor)}_n}}
\newcommand{\wmi}{\ensuremath{W^{(i)}_m}}
\renewcommand{\O}{\ensuremath{\mathcal O}}
\providecommand{\DontPrintSemicolon}{\dontprintsemicolon}
\providecommand{\LinesNumbered}{\linesnumbered}
\newtheorem{theorem}{Theorem}
\newtheorem{lemma}[theorem]{Lemma}
\newtheorem{proposition}[theorem]{Proposition}
\newtheorem{corollary}[theorem]{Corollary}
\newtheorem*{theorem*}{Theorem}
\newtheorem{definition}{Definition}
\newtheorem{fact}[theorem]{Fact}
\renewcommand{\le}{\leqslant}
\renewcommand{\ge}{\geqslant}
\renewcommand{\ge}{\geqslant}
\renewcommand{\epsilon}{\varepsilon}
\newcommand{\eps}{\varepsilon}
\newcommand{\mv}[1]{{\mathbf {#1}}}
\begin{document}
\title{{\bf Polar Codes: Speed of polarization \\ and polynomial gap to capacity}\thanks{This is the expanded version of a conference paper appearing in the {\em Proceedings of the 54th IEEE Symposium on Foundations of Computer Science (FOCS)}, October 2013.}}

\author{Venkatesan Guruswami\thanks{Research supported in part by a Packard Fellowship, MSR-CMU Center for Computational Thinking, and NSF CCF-0963975. Email: {\tt guruswami@cmu.edu}} \and 
Patrick Xia\thanks{Research supported in part by an NSF graduate fellowship and NSF CCF-0963975. Email: {\tt pjx@cs.cmu.edu}}}
\date{Computer Science Department \\ Carnegie Mellon University \\ Pittsburgh, PA 15213}

\maketitle
\thispagestyle{empty}

\begin{abstract}

We prove that, for all binary-input symmetric memoryless channels, polar codes enable reliable communication at rates within $\epsilon > 0$ of the Shannon capacity with a block length, construction complexity, and decoding complexity all bounded by a {\em polynomial} in $1/\epsilon$. Polar coding gives the {\em first known explicit construction} with rigorous proofs of all these properties; previous constructions were not known to achieve capacity with less than $\exp(1/\epsilon)$ decoding complexity except for erasure channels.

\smallskip
We establish the capacity-achieving property of polar codes via a direct analysis of the underlying martingale of conditional entropies, without relying on the martingale convergence theorem. This step gives rough polarization (noise levels $\approx \epsilon$ for the ``good" channels), which can then be adequately amplified by tracking the decay of the channel Bhattacharyya parameters. Our effective bounds imply that polar codes can have block length (and encoding/decoding complexity) bounded by a polynomial in $1/\epsilon$. The generator matrix of such polar codes can be constructed in polynomial time by algorithmically computing an adequate approximation of the polarization process. 
  
\end{abstract}

\noindent
{\bf Keywords}:
  Information theory; error-correction codes; linear codes; channel polarization; entropy; maximum likelihood decoding; Shannon capacity

\section{Introduction}
\label{sec:intro}

In this work, we establish that Ar\i kan's celebrated polar
codes~\cite{arikan-polar} have the desirable property of fast
convergence to Shannon capacity. Specifically, we prove that polar
codes can operate at rates within $\eps > 0$ of the Shannon capacity
of binary-input memoryless output-symmetric (BIS) channels with a block length
$N=N(\eps)$ that grows only polynomially in $1/\eps$. Further, a
generator matrix of such a code can be deterministically constructed
in time polynomial in the block length $N$.  For decoding, Ar\i kan's
successive cancellation decoder has polynomial (in fact $\O(N\log N)$)
complexity.

Thus, the delay and construction/decoding complexity of polar codes
can {\em all} be polynomially bounded as a function of the gap to
capacity. This provides a complexity-theoretic backing for the
statement ``polar codes are the first constructive capacity achieving
codes,'' common in the recent coding literature.  As explained below, these
attributes together distinguish polar codes from the Forney/Justesen
style concatenated code constructions for achieving capacity.

Our analysis of polar codes avoids the use of the martingale
convergence theorem --- this is instrumental in our polynomial
convergence bounds and as a side benefit makes the proof
elementary and self-contained.

\subsection{Context}

Shannon's noisy channel coding theorem implies that for every
memoryless channel $W$ with binary inputs and a finite output
alphabet, there is a capacity $I(W) \ge 0$ and constants $a_W <
\infty$ and $b_W > 0$ such that the following holds: For all $\eps >
0$ and integers $N \ge a_W/\eps^2$, there {\em exists} a binary
code $C \subset \{0,1\}^N$ of rate at least $I(W) - \eps$ which enables
  reliable communication on the channel $W$ with probability of
  miscommunication at most $2^{-b_W \eps^2 N}$. A proof implying these quantitative bounds is implicit in Wolfowitz's proof of Shannon's theorem \cite{wolfowitz}. 
  
 This remarkable theorem showed that a constant factor redundancy was sufficient to  achieve arbitrarily small probability of miscommunication, provided we
  tolerate a ``delay'' of processing $N$ channel outputs at a time for
  large enough block length $N$. Further, together with a converse theorem, it
  precisely characterized the minimum redundancy factor (namely,
  $1/I(W)$) needed to achieve such a guarantee. It is also known that a block length of $N \ge \Omega(1/\eps^2)$ is required to operate within $\eps$ of capacity and even a constant, say $0.1$, probability of miscommunication; in fact, a very precise statement that even pinned down the constant in the $\Omega(\cdot)$ notation was obtained by Strassen~\cite{strassen}.
  
 As Shannon's theorem is based on random coding and is non-constructive,
one of the principal theoretical challenges is to make it
constructive. More precisely, the goal is to give an explicit (i.e.,
constructible in deterministic $\mathrm{poly}(N)$ time) description of
the encoding function of the code, and a polynomial time
error-correction algorithm for decoding the correct transmitted
codeword with high probability (over the noise of the
channel). Further, it is important to achieve this with small block
length $N$ as that corresponds to the delay at the receiver before the
message bits can be recovered.

For simplicity let us for now consider the binary symmetric channel
(BSC) with crossover probability $p$, $0 < p < 1/2$, denoted $\bsc_p$
(our results hold for any BIS channel). Recall that $\bsc_p$
flips each input bit independently with probability $p$, and leaves it
unchanged with probability $1-p$. The Shannon capacity of $\bsc_p$ is
$1-h(p)$, where $h(x) = -x \log_2 x - (1-x) \log_2(1-x)$ is the binary
entropy function. For the BSC, the capacity can be achieved by binary
linear codes.

One simple and classic approach to {\em construct} capacity-achieving
codes is via Forney's concatenated codes~\cite{forney}. We briefly
recall this approach (see, for instance,
\cite[Sec. 3]{Gur-ldpc-survey} for more details). Suppose we desire
codes of rate $1-h(p)-\eps$ for communication on $\bsc_p$. The idea is
to take as an outer code any binary linear code $C_{\rm out} \subset
\{0,1\}^{n_0}$ of rate $1-\eps/2$ that can correct a fraction
$\gamma(\eps) > 0$ of worst-case errors. Then, each block of $b =
\Theta(\frac{1}{\eps^2} \log (1/\gamma))$ bits of the outer codeword
is further encoded by an inner code of rate within $\eps/2$ of Shannon
capacity (i.e., rate at least $1-h(p)-\eps/2$). This inner code is
constructed by brute force in time $\exp(\O(b))$.  By decoding the inner
blocks by finding the nearest codeword in $\exp(\O(b))$ time, and then
correcting up to $\gamma(\eps) n_0$ errors at the outer level, one can achieve
exponentially small decoding error probability. However the decoding
complexity grows like $n_0 \exp(\O(b))$. Thus both the construction
and decoding complexity have an exponential dependence on $1/\eps$.
In conclusion, this method allows one to obtain codes within $\eps$ of
capacity with a block length polynomially large in $1/\eps$. However,
the construction and decoding complexity grow exponentially in
$1/\eps$, which is undesirable.\footnote{One can avoid the brute force
  search for a good inner code by using a small ensemble of
  capacity-achieving codes in a Justesen-style
  construction~\cite{justesen}. But this will require taking the outer
  code length $n_0 > \exp(1/\eps^2)$, causing a large delay.}

\subsection{Our result: polynomial convergence to capacity of polar codes}
In this work, we prove that Ar\i kan's remarkable polar codes allow us
to approach capacity within a gap $\eps > 0$ with {\em delay} (block
length) and {\em complexity} both depending polynomially on
$1/\eps$. Polar codes are the {\em first} known construction with this
property.\footnote{Spatially coupled LDPC codes were also recently
  shown to achieve the capacity of general BIS
  channels~\cite{KRU-spatial-ldpc}. This construction gives a random
  code ensemble as opposed to a specific code, and as far as we know,
  rigorous bounds on the code length as a function of gap to capacity
  are not available.}

Below is a formal statement of the main result, stated for BIS
channels. For general, non-symmetric channels, the same claim holds
for achieving the {\em symmetric capacity}, which is the best rate
achievable with the uniform input bit distribution.

\begin{theorem}
\label{thm:main-intro}
There is an absolute constant $\mu < \infty$ such that the following
holds.  Let $W$ be a binary-input memoryless output-symmetric channel with
capacity $I(W)$. Then there exists $a_W < \infty$ such that for all
$\eps > 0$ and all powers of two $N \ge a_W (1/\eps)^\mu$, there is a
deterministic $\mathrm{poly}(N)$ time construction of a binary linear
code of block length $N$ and rate at least $I(W)-\eps$ and a
deterministic $N \cdot \mathrm{poly}(\log N)$ time decoding algorithm for the code with block
error probability at most $2^{-N^{0.49}}$ for communication over $W$.
\end{theorem}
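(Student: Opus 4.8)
The overall strategy is to analyze Ar\i kan's polar transform directly, via the martingale of conditional entropies $H_n = H(W_n^{(I_n)})$ where $I_n$ is a uniformly random index in $\{0,\dots,2^n-1\}$, but to extract \emph{quantitative} rates of convergence rather than merely invoking the martingale convergence theorem. Recall that under the basic $2\times 2$ polarization step, a channel $W$ is mapped to two channels $W^-$ and $W^+$ with $H(W^-) + H(W^+) = 2H(W)$ (entropy conservation) and, crucially, a strict-polarization estimate: the two entropies move apart by an amount controlled by how far $H(W)$ is from $0$ or $1$. The plan has three phases.

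\emph{Phase 1 (rough polarization).} First I would show that after $n = O(\log(1/\eps))$ levels of recursion, all but an $\eps/2$ fraction of the $2^n$ synthesized channels have conditional entropy either below some small threshold $\tau$ (say $\tau = \eps^{c}$ for a suitable constant $c$) or above $1-\tau$. This is done by a potential-function argument on the martingale $H_n$: one tracks a quantity like $\E[\sqrt{H_n(1-H_n)}]$ or $\E[H_n(1-H_n)]$ and shows it shrinks by a constant factor (or at least by $1 - \Omega(1)$) per level whenever a non-negligible fraction of channels are in the ``middle'' range $(\tau, 1-\tau)$. Summing the per-step drop over $n$ levels and using that the potential is nonnegative forces the middle mass to be at most $\eps/2$ after $O(\log(1/\eps))$ levels; since $\E[H_n] = H(W) = 1 - I(W)$ is conserved, the fraction of channels with entropy below $\tau$ is at least $I(W) - \eps/2 - \tau \ge I(W) - \eps$ for small enough $\tau$. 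This is the step the abstract flags as ``direct analysis of the martingale without the convergence theorem,'' and I expect the delicate part here is getting the constant in the per-step potential drop to be bounded away from zero uniformly over the middle range, which requires the strict-polarization estimate for the $2\times 2$ kernel with explicit constants.

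\emph{Phase 2 (amplification of reliability).} Rough polarization only gives channels of entropy $\le \tau = \mathrm{poly}(\eps)$, which is not small enough for a $2^{-N^{0.49}}$ error bound. To fix this I would run the recursion for an additional $n' = \Theta(n)$ levels and track the Bhattacharyya parameter $Z(W_n^{(i)})$ rather than the entropy. The key inequalities are $Z(W^+) \le Z(W)^2$ and $Z(W^-) \le 2 Z(W)$, together with the fact that small entropy implies small $Z$ (for BIS channels $Z(W) \le \sqrt{H(W)\cdot(\text{const})}$ or a comparable bound). Starting from channels with $Z \le \mathrm{poly}(\eps)$, a standard supermartingale / Chernoff-type argument on the sequence of $\pm$ steps shows that after $n'$ more levels, a $1 - \eps/2$ fraction of the descendants of each ``good'' channel have $Z \le 2^{-2^{n'/2 - o(n')}} = 2^{-N^{0.49}}$; choosing $N = 2^{n+n'}$ and absorbing constants gives $N = a_W (1/\eps)^\mu$ for an absolute $\mu$. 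The good channel set of size $\ge (I(W)-\eps) N$ defines the information positions of the polar code, and Ar\i kan's bound shows the block error probability of successive-cancellation decoding is at most the sum of the $Z$-values over these positions, which is at most $N \cdot 2^{-N^{0.49}} \le 2^{-N^{0.48}}$ (adjusting the exponent slightly).

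\emph{Phase 3 (constructivity and complexity).} Finally I would address the algorithmic claims. Encoding is multiplication by the standard polar generator matrix $G_N$, computable in $O(N\log N)$ time; decoding is Ar\i kan's successive-cancellation decoder, which runs in $O(N \log N)$ time. The only nontrivial algorithmic point is \emph{identifying} the set of good indices in $\mathrm{poly}(N)$ deterministic time, since the exact synthesized channels have output alphabets of doubly-exponential size. For this I would invoke the channel-degrading/upgrading approximation technique (Tal--Vardy style): at each level, quantize each intermediate channel to $\le \mathrm{poly}(N)$ output symbols via a degrading merge, so that the recursion stays polynomial-sized, and argue that the accumulated approximation error in the $Z$-parameters over $\log N$ levels is small enough that the selected index set still has rate $\ge I(W) - \eps$ and error probability $\le 2^{-N^{0.49}}$. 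I expect Phase 1 to be the main obstacle — everything in Phases 2 and 3 is a quantitative sharpening of known arguments, whereas the constant-factor-per-level entropy drop is the genuinely new ingredient and the source of the polynomial (rather than merely finite) dependence on $1/\eps$.
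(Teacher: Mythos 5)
Your Phases 1 and 2 match the paper's argument essentially step for step: the paper also tracks the potential $\E[\sqrt{H_n(1-H_n)}]$, proves it contracts by a fixed factor $\Lambda<1$ per level using the strict-separation bound $H(W^-)-H(W)\ge \theta H(W)(1-H(W))$ with $\theta>3/4$ (via a lemma of \c{S}a\c{s}o\u{g}lu on $H(X_1+X_2\mid Y_1,Y_2)$), applies Markov's inequality, and then bootstraps with $Z(W^+)=Z(W)^2$, $Z(W^-)\le 2Z(W)$ and a Chernoff bound on the number of squaring steps in each of $c_\rho$ blocks of levels. That part of your plan is sound.

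The genuine gap is in Phase 3. You propose to run the Tal--Vardy-style degrading approximation through all $\log N$ levels and then ``argue that the accumulated approximation error in the $Z$-parameters \dots is small enough that the selected index set still has \dots error probability $\le 2^{-N^{0.49}}$.'' This step fails as stated: with output alphabets quantized to $\mathrm{poly}(N)$ symbols, the per-level approximation error in the entropy (or Bhattacharyya parameter) is at best inverse polynomial in $N$, so the estimated parameters can never certify that a channel has $Z \le 2^{-N^{0.49}}$ rather than, say, $Z = 1/\mathrm{poly}(N)$. A purely approximation-based selection rule therefore only yields an inverse-polynomial bound on the block error probability. The paper's fix is a two-stage construction that mirrors the two-stage analysis: the degrading/binning approximation is used only for the first $m=\Theta(\log(1/\eps))$ levels, to identify the roughly polarized indices $\hat R(m)$ (here an additive error of order $\rho^m$ in $H$ is both achievable with alphabet size $\mathrm{poly}(1/\eps)$ and sufficient); for the remaining $n_0-m$ levels the good descendants are specified \emph{explicitly and combinatorially} --- an index is kept iff each block of its low-order bits contains at least a $\beta$ fraction of ones (the set $G(n)$ in the fine-polarization proof) --- which requires no channel estimation at all and inherits the $2^{-2^{\delta n_0}}$ bound on $Z$ directly from the analysis. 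Without this (or some equivalent device), your construction would be efficient but would not provably achieve the claimed error probability.
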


\noindent {\bf Remarks:}
\begin{enumerate}
\item  Using our results about polar codes, we can also construct codes of rate $I(W)-\eps$ with $2^{-\Omega_\eps(N)}$ block error
probability (similar to Shannon's theorem) with similar claims about
the construction and decoding complexity. The idea is to concatenate an outer
code that can correct a small fraction of worst-case errors with a
capacity-achieving polar code of dimension $\mathrm{poly}(1/\eps)$ as
the inner code. A similar idea with outer Reed-Solomon codes yielding  $2^{-\Omega(N/\mathrm{poly}(\log N))}$ block error probability is described in \cite{BJE10}.
\item The construction time in Theorem \ref{thm:main-intro} can be made
  $\mathrm{poly}(1/\eps) + \O(N \log N)$. As our main focus is on the
  finite-length behavior when $N$ is also $\mathrm{poly}(1/\eps)$, we
  are content with stating the $\mathrm{poly}(N)$ claim above.
\end{enumerate}

\medskip Showing that polar codes have a gap to capacity that is polynomially
  small in $1/N$ is our principal contribution. The decoding algorithm
  remains the same successive cancellation decoder of Ar\i kan~\cite{arikan-polar}. 
  The proof of efficient constructibility follows the approach, originally due to Tal and Vardy~\cite{tal-vardy}, of approximating the channels corresponding to different input bits seen at the decoder by a degraded version with a smaller output alphabet. The approximation error of this process and some of its variants were analyzed in \cite{PHTT}. We consider and analyze a somewhat simpler degrading process. One slight subtlety here is that we can only estimate the channel's Bhattacharyya parameter within error that is polynomial in $1/N$ in $\mathrm{poly}(N)$ time, which will limit the analysis to an inverse polynomial block error probability. To get a block error probability of $2^{-N^{0.49}}$ we use a two step construction method that follows our analysis of the polarization process. As a bonus, this gives the better construction time alluded to in the second remark above.

\smallskip
Prior to our work, it was known that the block error probability of
successive cancellation decoding of polar codes is bounded by
$2^{-N^{0.49}}$ for rate approaching $I(W)$ in the limit of $N \to
\infty$~\cite{arikan-telatar}. However, the underlying analysis found in \cite{arikan-telatar}, which depended on the martingale convergence theorems, did not offer any bounds on the finite-length convergence to capacity, i.e., the block length $N$ required for the rate to be within $\eps$ of the capacity $I(W)$. To quote from the introduction of the 
recent breakthrough on spatially coupled LDPC codes \cite{KRU-spatial-ldpc}:
  \begin{quote}
  ``There are perhaps only two areas in which polar codes could be further improved. First, for polar codes the convergence of their performance to the asymptotic limit is slow. Currently no rigorous statements regarding this convergence for the general case are known. But ``calculations" suggest that, for a fixed desired error probability, the required block length scales like $1/\delta^\mu$, where $\delta$ is the additive gap to capacity and where $\mu$ depends on the channel and has a value around $4$."\footnote{The second aspect concerns {\em universality}: the design of polar codes depends on the channel being used, and the same code may not achieve capacity over a non-trivial class of channels.}
  \end{quote}
  The above-mentioned heuristic calculations are based on ``scaling laws" and presented in \cite{KMTU}. We will return to the topic of scaling laws in Section \ref{sec:related-work} on related work.

We note that upper bounds on the block length $N$ as a function of gap
$\eps$ to capacity are crucial, as without those we cannot estimate
the complexity of communicating at rates within $\eps$ of
capacity. Knowing that the asymptotic complexity is $\O(N \log N)$ for
large $N$ by itself is insufficient (for example, to claim that polar
codes are better than concatenated codes) as we do not know how large
$N$ has to be! While an explicit value of $\mu$ in Theorem
\ref{thm:main-intro} can be calculated, it will be rather large, and
obtaining better bounds on $\mu$, perhaps closer to the empirically
suggested bound of $\approx 4$, is an interesting open
problem\footnote{While we were completing the writeup of this paper
  and circulating a draft, we learned about a recent
  independently-derived result in \cite{hassani-thesis} stating that
  $\mu \approx 6$ would suffice for block error probabilities bounded
  by an inverse polynomial. Our analysis primarily focuses on the
  $2^{-N^{.49}}$ block error probability result.}.

\subsection{Techniques}

Let us first briefly discuss the concept of polarization in Ar\i kan's
work, and then turn to aspects of our work. More formal background
on Ar\i kan's construction of polar codes appears in
Section~\ref{sec:polar-prelims} (with slightly different and notation
that is more conventional in the polar coding literature). A good, easy
to read, reference on polar codes is the recent survey by
\c{S}a\c{s}o\u{g}lu~\cite{sasoglu-book}.

Fix $W$ to be an arbitrary symmetric channel.  If we have a
capacity-achieving binary linear code $C$ of block length $N$ for $W$,
then it is not hard to see that by padding the generator matrix of $C$
one can get an $N \times N$ invertible matrix $A_N$ with the following
{\em polarization property}. Let $\mv{u} \in \{0,1\}^N$ be a uniformly
random (column) vector. Given the output $\mv{y}$ of $W$ when the $N$
bits $\mv{x} = A_N \mv{u}$ are transmitted on it, for a $1-o(1)$
fraction of bits $u_i$, its conditional entropy given $\mv{y}$ and the
previous bits $u_1,\dots,u_{i-1}$ is either close to $0$ (i.e., that
bit can be determined with good probability) or close to $1$ (i.e.,
that bit remains random).  Since the conditional entropies of $\mv{u}$
given $\mv{y}$ and $\mv{x}$ given $\mv{y}$ are equal to each other,
and the latter is $\approx (1-I(W))N$, the fraction of bits $u_i$ for
which the conditional entropy given $\mv{y}$ and the previous bits
$u_1,\dots,u_{i-1}$ is $\approx 0$ (resp. $\approx 1$) is $\approx
I(W)$ (resp. $\approx 1-I(W)$).

Ar\i kan gave a recursive construction of such a polarizing matrix
$A_N$ for $N=2^n$: $A_N = G_2^{\otimes n} B_n$ where $G_2
=\left( \begin{smallmatrix} 1 & 1 \\ 0 & 1 \end{smallmatrix} \right)$
and $B_n$ is a permutation matrix (for the bit-reversal
permutation). In addition, he showed that the recursive structure of
the matrix implied the existence of an efficiently decodable
capacity-achieving code. The construction of this code amounts to
figuring out which input bit positions have conditional entropy
$\approx 0$, and which don't (the message bits $u_i$ corresponding to
the latter positions are ``frozen'' to $0$).

The proof that $A_N$ has the above polarization property proceeds by
working with the Bhattacharyya parameters $Z_n(i) \in [0,1]$
associated with decoding $u_i$ from $\mv{y}$ and
$u_1,\dots,u_{i-1}$. This quantity is the Hellinger affinity between
the output distributions when $u_i = 0$ and $u_i=1$. The values of the
Bhattacharyya parameter of the $2^n$ bit positions at the $n$'th level
can be viewed as a random variable $Z_n$ (induced by the uniform
distribution on the $2^n$ positions). The simple recursive
construction of $A_N$ enabled Ar\i kan to proved that the sequence of
random variables $Z_0,Z_1,Z_2,\dots$ form a supermartingale. In
particular, $Z_{n+1}$ equals $Z_n^2$ with probability $1/2$ and is at
most $2Z_n - Z_n^2$ with probability $1/2$.
\footnote{For the special case of the binary erasure
  channel, the Bhattacharyya parameters simply equal the probability that the bit is unknown. In this case, the upper bound of $2 Z_n - Z_n^2$ becomes an exact bound, and the $Z_i$'s form a martingale.}

One can think the evolution of the Bhattacharyya parameter as a
stochastic process on the infinite binary tree, where in each step we
branch left or right with probability $1/2$.  The polarization
property is then established by invoking the martingale convergence
theorem for supermartingales.  The martingale convergence theorem
implies that $\lim_{n \to \infty} |Z_{n+1} - Z_n| = 0$, which in this
specific case also implies $\lim_{n \to \infty} Z_n( 1-Z_n) = 0$ or in
other words polarization of $Z_n$ to $0$ or $1$ for $n \to
\infty$. However, it does {\em not} yield any effective bounds on the
      {\em speed} at which polarization occurs. In particular, it does
      not say how large $n$ must be as a function of $\eps$ before
      $\E[ Z_n(1-Z_n) ] \le \eps$; such a bound is necessary, though
      not sufficient, to get codes of block length $2^n$ with rate within $\eps$
      of capacity.

In this work, we first work with the {\em entropy} of the channels $\wni$ associated with decoding the $i$'th bit, namely $H(\wni) = H(u_i \mid \mv{y}, u_1, u_2,\dots, u_{i-1})$ to prove that these values polarize to $0$ and $1$ exponentially fast in the number of steps $n$. Formally, we prove that for $n = O(\log 1/\eps)$, $H_n \in (\eps,1-\eps)$ with probability at most $\eps$, where $H_n$ is the random variable associated with the entropy values $H(\wni)$ at the $n$'th level.  As the Bhattacharyya parameter is within a square root factor of the entropy, we get a similar claim about $Z_n$. The advantage in working with the entropy instead of the Bhattacharyya parameter is that the entropy forms a martingale, so that given $H_n$, the two possible values of $H_{n+1}$ are $H_n \pm \alpha$ for some $\alpha \ge 0$. We show that these two values are sufficiently separated, specifically that $\alpha \ge \frac{3}{4} H_n(1-H_n)$. Thus, unless $H_n$ is very close to $0$ or $1$, the two new values have a sizeable difference. We use this to show that $\E[ \sqrt{H_n(1-H_n)} ]$ decreases by a constant factor in each step, which implies the desired exponential decay in $H_n(1-H_n)$ and therefore also $Z_n(1-Z_n)$. 
\footnote{We note that one can also prove directly
that $\E[\sqrt{Z_n (1-Z_n)}]$ decreases by a constant factor in each step and an earlier version of this paper (and independently \cite{hassani-thesis}) used this approach. The analysis presented here in terms of $H_n$ is cleaner and more intuitive in our opinion.}

The above bound is itself, however, not enough to prove
Theorem \ref{thm:main-intro}. What one needs is {\em fine
  polarization}, where the smallest $\approx I(W) N$ values
among $Z_n(i)$ all {\em add up} to a quantity that tends to $0$ for
large $N$ (in fact, this sum should be at most $2^{-N^{0.49}}$ if we
want the block error probability claimed in Theorem
\ref{thm:main-intro}).  To establish this, we use that in further steps, $Z_{n+1}$ reduces rapidly to $Z_n^2$ with probability $1/2$, together with Chernoff-bound
arguments (similar to \cite{arikan-telatar}) to bootstrap the rough
polarization of the first step to a fine polarization that suffices to bound the block error decodability.

Our analysis is elementary and self-contained, and does not use the
martingale convergence theorem.  The ingredients in our analysis were
all present explicitly or implicitly in various previous
works. However, it appears that their combination to imply a
polynomial convergence to capacity has not been observed
before, as evidenced by the explicit mention of this as an open
problem in the literature, eg. \cite[Section 6.6]{korada-thesis}, \cite[Section Ia]{KRU-spatial-ldpc}, \cite[Section 1.3]{shpilka12}, and \cite[Section I]{tal-vardy} (see the discussion following Corollary 2).

\subsection{Related work}
\label{sec:related-work}
The simplicity and elegance of the construction of polar codes, and
their wide applicability to a range of classic information theory
problems, have made them a popular choice in the recent
literature. Here we only briefly discuss aspects close to our focus on
the speed of polarization.

Starting with Ar\i kan's original paper, the ``rate of polarization''
has been studied in several works. However, this refers to something
different than our focus; this is why we deliberately use the term
``speed of polarization'' to refer to the question of how large $n$
needs to be before, say, $Z_n$ is in the range $(\eps,1-\eps)$ with
probability $\eps$. The rate of polarization refers to pinpointing a
function $\Upsilon$ with $\Upsilon(n) \to 0$ for large $n$ such that
$\lim_{n \to \infty} \mathrm{Pr}[ Z_n \le \Upsilon(n) ] = I(W)$. Ar\i
kan proved that one can take $\Upsilon(n) =
O(2^{-5n/4})$~\cite{arikan-polar}, and later Ar\i kan and Telatar
established that one can take $\Upsilon(n) = 2^{-2^{\beta n}}$ for any
$\beta < 1/2$~\cite{arikan-telatar}. Further they proved that for
$\gamma > 1/2$, $\lim_{n \to \infty} \mathrm{Pr}[ Z_n \le
  2^{-2^{\gamma n}} ] = 0$. This determined the rate at which the
Bhattacharyya parameters of the ``noiseless'' channels polarize to $0$
in the limit of larger $n$.  More fine grained bounds on this
asymptotic rate of polarization as a function of the code rate were obtained
in \cite{hassani-urbanke-i}.

For our purpose, to get a finite-length statement about the
performance of polar codes, we need to understand the speed at which
$\mathrm{Pr}[ Z_n \le \Upsilon(n)] $ approaches the limit $I(W)$ as
$n$ grows (any function $\Upsilon$ with $\Upsilon(n) = o(1/2^n)$ will
do, though we get the right $2^{-2^{0.49 n}}$ type decay).

Restated in our terminology, in \cite{GHU-isit12} the authors prove
the following ``negative result'' concerning gap to capacity: for
polar coding with successive cancellation (SC) decoding to have
vanishing decoding error probability at rates within $\eps$ of
capacity, the block length has to be {\em at least}
$(1/\eps)^{3.553}$. (A slight caveat is that this uses the sum
  of the error probabilities of the well-polarized channels as a
  proxy for the block error probability, whereas in fact this sum is
  only an upper bound on the decoding error probability of the SC
  decoder.)

Also related to the gap to capacity question is the work on ``scaling
laws,'' which is inspired by the behavior of systems undergoing a
phase transition in statistical physics. In coding theory, scaling
laws were suggested and studied in the context of iterative decoding
of LDPC codes in \cite{AMRU-scaling}. In that context, for a channel
with capacity $C$, the scaling law posits the existence of an exponent
$\mu$ such that the block error probability $P_e(N,R)$ as a function
of block length $N$ and rate $R$ tends in the limit of $N \to \infty$
while fixing $N^{1/\mu} (C-R) = x$, to $f(x)$ where $f$ is some
function that decreases smoothly from $1$ to $0$ as its argument
changes from $-\infty$ to $+\infty$. Coming back to polar codes, in
\cite{KMTU}, the authors make a {\em Scaling Assumption} that the
probability $Q_n(x)$ that $Z_n$ exceeds $x$ is such that $\lim_{n \to
  \infty} N^{1/\mu} Q_n(x)$ exists and equals a function $Q(x)$. Under
this assumption, they use simulations to numerically estimate $\mu
\approx 3.627$ for the BEC. Using the small $x$ asymptotics of $Q(x)$
suggested by the numerical data, they predict an $\approx
(1/\eps)^\mu$ upper bound on the block length as a function of the gap
$\eps$ to capacity for the BEC. For general channels, under the
heuristic assumption that the densities of log-likelihood ratios
behave like Gaussians, an exponent of $\mu \approx 4.001$ is suggested
for the Scaling Assumption. However, to the best of our knowledge, it
does not appear that one can get a rigorous upper bound on block
length $N$ as a function of the gap to capacity via these methods.

\section{Preliminaries}
We will work over a binary input alphabet $\B = \{0, 1\}$. Let $W : \B \to \Y$ be a binary-input memoryless symmetric channel with finite output alphabet $\Y$ and transition probabilities $\{W(y | x) : x \in \B, y \in \Y\}$. A binary-input channel is symmetric when the two rows of the transition probability matrix are permutations of each other; i.e., there exists a bijective mapping $\sigma : \Y \mapsto \Y$ where $\sigma = \sigma^{-1}$ and $W(y|0) = W(\sigma(y)|1)$ for all $y$. Both the binary erasure and binary symmetric channels are examples of symmetric channels.

Let $X$ represent a uniformly distributed binary random variable, and let $Y$ represent the result of sending $X$ through the channel $W$.

The entropy of the channel $W$, denote $H(W)$, is defined as the entropy of $X$, the input, given the output $Y$, i.e., $H(W) = H(X | Y)$. It represents how much uncertainty there is in the input of the channel given the output of the channel. The mutual information of $W$, sometimes known as the capacity, and denoted $I(W)$, is defined as the mutual information between $X$ and $Y$ when the input distribution $X$ is uniform:
\begin{dmath*}
  I(W)= I(X;Y)
  = 1  - H(X|Y) = 1 - H(W) \ . 
\end{dmath*}
We have $0 \le I(W) \le 1$, with a larger value meaning a less noisy channel.
As the mutual information expression is difficult to work with directly, we will often refer to the Bhattacharyya parameter of $W$ as a proxy for the quality of the channel:
\[
  Z(W) = \sum_{y \in \Y} \sqrt{W(y|0) W(y|1)} \ .
\]
This quantity is a natural one to capture the similarity between the channel outputs when the input is $0$ and $1$: $Z(W)$ is simply the dot product between the unit vectors obtained by taking the square root of the output distributions under input $0$ and $1$ (which is also called the Hellinger affinity between these distributions).

Intuitively, the Bhattacharyya parameter $Z(W)$ should be near $0$ when $H(W)$ is near $0$ (meaning that it is easy to ascertain the input of a channel given the output), and conversely, $Z(W)$ is near $1$ when $H(W)$ is near $1$. This intuition is quantified  by the following expression (where the upper bound is from \cite[Lemma 1.5]{korada-thesis} and the lower bound is from \cite{arikan-source}):
\begin{equation}
  \label{eq:ZversusH}
  Z(W)^2 \le H(W) \le Z(W) \ .
\end{equation}

Given a single output $y \in \Y$ from a channel $W$, we would like to be able to map it back to $X$, the input to the channel. The most obvious way to do this is by using the maximum-likelihood decoder:
\[
  \hat X = \argmax_{x \in \B} \Pr(x | y) = \argmax_{x \in \B} W(y|x)
\]
where a decoding error is declared if there is a tie. Thus, the probability of error for a uniform input bit under maximum likelihood decoding is
\begin{align*}
  P_{e}(W)& = \Pr(\hat X \ne X) \\
          &= \frac 12 \sum_{x \in \B} \sum_{y \in \Y} W(y|x) \1_{W(y|x) \le W(y|x\oplus 1)}
\end{align*}
where $\1_x$ denotes the indicator function of $x$. Directly from this expression, we can conclude
\begin{equation}
  P_e(W) \le Z(W)
  \label{eq:P_e<=Z}
\end{equation}
since $\1_{W(y|x) \le W(y|x \oplus 1)} \le \sqrt{W(y|x \oplus 1)}/\sqrt{W(y|x)}$, and the channel is symmetric (so the sum over $x \in \B$ and the $1/2$ cancel out). Thus, the Bhattacharyya parameter $Z(W)$ also bounds the error probability of maximum likelihood decoding based on a single use of the channel $W$. 

\section{Polar codes}
\label{sec:polar-prelims}
\subsection{Construction preliminaries}
This is a short primer on the motivations and techniques behind polar coding, following \cite{arikan-polar, sasoglu-book}. Consider a family of invertible linear transformations $G_n : \B^{2^n} \to \B^{2^n}$ defined recursively as follows: $G_0 = [ 1 ]$ and for a $2N$-bit vector $u = (u_0, u_1, \dots, u_{2N-1})$ with $N=2^n$, we define
\begin{dmath}
G_{n+1} u = \,  G_{n} (u_0 \oplus u_1, u_2 \oplus u_3, \dots, u_{2N-2} \oplus u_{2N - 1}) \circ
    G_n (u_1, u_3, u_5, \dots, u_{2N-1})
   \label{eq:Grecursion}
\end{dmath}
where $\circ$ is the vector concatenation operator. More explicitly, this construction can be shown to be equivalent to the explicit form $G_n =  K^{\otimes n} B_n$ (see \cite[Sec. VII]{arikan-polar}) where $B_{n}$ is the $2^n \times 2^n$ bit-reversal permutation matrix for $n$-bit strings, 
\(K = 
\left[
  \begin{matrix}
    1 & 1\\
    0 & 1
  \end{matrix}
\right]\) and $\otimes$ denotes the Kronecker product.

Suppose we use the matrix $G_n$ to encode a $N = 2^n$-size vector $U$, $X = G_n U$, and then transmit $X$ over a binary symmetric channel $W$. It can be shown with a Martingale Convergence Theorem-based proof \cite{arikan-polar} that for all $\eps > 0$,
\begin{equation}
  \lim_{N \to \infty} \Pr_i \left[ H(U_i | U_0^{i-1}, Y_0^{N-1})  < \epsilon \right] = I(W).
  \label{eq:asymptotics}
\end{equation}
where the notation $U_i^j$ denotes the subvector $(U_i, U_{i+1}, \dots, U_j)$.

In words, there exists a {\it good set} of indices $i$ so that for all elements in this set, given all of the outputs from the channel and (correct) decodings of all of the bits indexed less than $i$, the value of $U_i$ can be ascertained with low probability of error (as it is a low-entropy random variable). 

For every element that is outside of the good set, we do not have this guarantee; this suggests a encoding technique wherein we ``freeze'' all indices outside of this good set to a certain predefined value ($0$ will do). We call the indices that are not in the good set as the {\em frozen} set.
\subsection{Successive cancellation decoder}
The above distinction between good indices and frozen indices suggests a successive cancellation decoding technique where if the index is in the good set, we output the maximum-likelihood bit (which has low probability of being wrong due to the low entropy) or if the index is in the frozen set, we output the predetermined bit (which has zero probability of being incorrect). A sketch of such a successive cancellation decoder is presented in Algorithm \ref{alg:sc-decoder}.
\begin{definition}
  A polar code with frozen set $F \subset \{0,1,\dots,N-1\}$ is defined as
  \[ C_F = \{G_n u \mid u \in \{0,1\}^N, u_F = 0\} \ . \]
\end{definition}

  \label{sec:sc-decoder}
\begin{algorithm}[h]
  \caption{Successive cancellation decoder\label{alg:sc-decoder}}
  \SetKwInOut{Input}{input}
  \SetKwInOut{Output}{output}
  \LinesNotNumbered 
  \DontPrintSemicolon

  \Input{$y_0^{N-1}$, $F$, $W$}
  \Output{$u_0^{K-1}$}
  \nl $\hat u \gets$ zero vector of size $N$ \;
  \nl \For{$i \in 0..N-1$} {
    \nl \uIf{$i \in F$}{
      \nl  $\hat u_i \gets 0$
    }
    \nl \Else{
      \nl \uIf{$\frac{\Pr(U_i = 0 | U_0^{i-1} = \hat u_0^{i-1}, Y_0^{N-1} = y_0^{N-1})}{\Pr(U_i = 1 | U_0^{i-1} = \hat u_0^{i-1}, Y_0^{N-1} = y_0^{N-1})} > 1$} {  \label{line:prob-calc}
        \nl $\hat u_i \gets 0$
      }
      \nl \Else {
        \nl $\hat u_i \gets 1$
      }
    }
  }
  \nl \Return $\hat u_{\overline{F}}$

\medskip\noindent  {\bf Remark}. The probability ratio on line $\ref{line:prob-calc}$ can be computed with a na\"ive approach by recursively computing (where $n = \lg N$) $W_n^{(i)}(y_0^{N-1}, \hat u_0^{i-1}|x)$ for $x \in \B$ according to the recursive evolution equations \eqref{eq:w-},\eqref{eq:w+},\eqref{eq:evolution}. The result is true if the expression is larger for $x = 1$ than it is for $x = 0$, as by Bayes's theorem,
\begin{align*}
    \Pr(U_i = 0 | U_0^{i-1} = \hat u_0^{i-1}, Y_0^{N-1} = y_0^{N-1}) =\\
    \frac{W_n^{(i)}(y_0^{N-1}, \hat u_0^{i-1} | 0) \Pr(u_i = 0)}{\Pr(U_0^{i-1} = \hat u_0^{i-1}, Y_0^{N-1} = y_0^{N-1})},
  \end{align*}
    and the term in the denominator is present in both the $U_i = 0$ and $U_i = 1$ expression and therefore cancels in the division; the $\Pr(u_i = 0)$ term cancels as well for a uniform prior on $u_i$ (which is necessary to achieve capacity for the symmetric channel $W$).
  
    The runtime of the algorithm can be improved to $\O(N \log N)$ by computing the probabilities on line \ref{line:prob-calc} with a divide-and-conquer approach as in \cite{arikan-polar}. We note that this runtime bound assumes constant-time arithmetic; consideration of $n$-bit arithmetic relaxes this bound to $\O(N \polylog(N))$. For a treatment of more aggressive quantizations, see \cite[Chapter 6]{hassani-thesis}.
\end{algorithm}

By \eqref{eq:asymptotics}, if we take $F$ to be the positions with conditional entropy exceeding $\eps$, the rate of such a code would approach $I(W)$ in the limit $N \to \infty$.

To simplify the probability calculation (as seen on line $\ref{line:prob-calc}$ of Algorithm \ref{alg:sc-decoder} and explained further in the comments), it is useful to consider the induced channel seen by each bit, $\wni : \B \to \Y^{N} \times \B^{i}$, for $0 \le i \le 2^n-1$. Here, we are trying to ascertain the most probable value of the input bit $U_i$ by considering the output from all channels $Y_0^{N-1}$ and the (decoded) input from all channels before index $i$. Since the probability of decoding error at every step is bounded above by the corresponding Bhattacharyya parameter $Z$ by \eqref{eq:P_e<=Z}, we can examine $Z(\wni)$ as a proxy for $P_e(\wni)$.

It will be useful to redefine $\wni$ recursively both to bound the evolution of $Z(\wni)$ and to facilitate the computation. Consider the two transformations $^-$ and $^+$ defined as follows:
\begin{equation}
  W^-(y_1, y_2|x_1) = \sum_{x_2 \in \B} \frac12 W(y_1 | x_1 \oplus x_2) W(y_2|x_2)
  \label{eq:w-}
\end{equation}
and
\begin{equation}
  W^+(y_1, y_2, x_1 | x_2) = \frac12 W(y_1 | x_1 \oplus x_2) W(y_2 | x_2).
  \label{eq:w+}
\end{equation}
This process \eqref{eq:w-} and \eqref{eq:w+} preserves information in the sense that
\begin{equation}
  \label{eq:I-conserved}
  I(W^-) + I(W^+) = 2I(W),
\end{equation}
which follows by the chain rule of mutual information, as (suppose $X_1$ is the input seen at $W^-$ and $X_2$ is the input seen at $W^+$ and $Y_1, Y_2$ are the corresponding output variables)
\begin{align*}
  I(W^-) + I(W^+) &= I(X_1; Y_1, Y_2) + I(X_2; Y_1, Y_2 | X_1)\\
                  &= I(X_1, X_2; Y_1, Y_2) = 2I(W).
\end{align*}
We also associate $^-$ with a ``downgrading'' transformation and $^+$ with an ``upgrading'' transformation, as $I(W^-) \le I(W) \le I(W^+)$. 

Tying the operations $^-$ and $^+$ back to $Z(\wni)$, we notice that $W^- = W_1^{(0)}$ (the transformation $^-$ adds uniformly distributed noise from another input $x_2$, which is equivalent to the induced channel seen by the $0$th bit) and $W^+ = W_1^{(1)}$ (where here we clearly have the other input bit). More generally, by the recursive construction \eqref{eq:Grecursion}, one can conclude that the $\wni$ process can be redefined in a recursive manner as
\begin{equation}
  W_{n+1}^{(i)} = 
  \begin{cases}
    \left( W_n^{(\lfloor i/2\rfloor)} \right)^- & \mbox{ if $i$ is even} \\
    \left( W_n^{(\lfloor i/2 \rfloor)} \right)^+ & \mbox{ if $i$ is odd} 
  \end{cases}
  \label{eq:evolution}
\end{equation}
with the base channel $W_0^{(0)} = W$.

The evolution of $I(W^+)$ and $I(W^-)$ is difficult to analyze, but we will see in the next section that we can adequately bound $Z(W^+)$ and $Z(W^-)$ as a proxy. Such bounds are sufficient for analyzing our decoder, as we can bound the block error probability obtained by the successive cancellation decoder described in algorithm \ref{alg:sc-decoder} with bounds on the Bhattacharyya parameters of the subchannels. The probability of the $i$th (not frozen) bit being misdecoded by the algorithm, given the channel outputs and the input bits with index less than $i$, is bounded above by $Z(\wni)$ by equation \eqref{eq:P_e<=Z}. This observation, with the union bound, immediately gives the following lemma.

\begin{lemma}
  The block error probability of Algorithm \ref{alg:sc-decoder} on a polar code $C$ of length $n$ with frozen set $F$ is bounded above by the sum of the Bhattacharyya parameters $\sum_{i \in \overline{F}} Z(W_n^{(i)})$.
  \label{l:sc-prob}
\end{lemma}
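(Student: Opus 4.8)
The plan is to bound the block-error event of Algorithm~\ref{alg:sc-decoder} by a union of ``first-error'' events, one per non-frozen coordinate, and then to bound each of those by the maximum-likelihood error probability of the induced channel $W_n^{(i)}$, which \eqref{eq:P_e<=Z} controls by $Z(W_n^{(i)})$.

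First I would fix the probability space: let $U\in\{0,1\}^N$ with $U_F=0$ and $U_{\overline F}$ uniform, let $X=G_n U$, and let $Y_0^{N-1}$ be the output of $N$ independent uses of $W$ on $X$. Write $\hat U$ for the vector produced by Algorithm~\ref{alg:sc-decoder} on input $(Y_0^{N-1},F,W)$; since $\hat U_i=0=U_i$ for every $i\in F$, a block error occurs exactly when $\hat U_i\neq U_i$ for some $i\in\overline F$. For $i\in\overline F$ define the event $\mathcal E_i=\{\hat U_j=U_j\text{ for all }j<i\}\cap\{\hat U_i\neq U_i\}$. If a block error occurs, let $i^\star$ be the least non-frozen index with $\hat U_{i^\star}\neq U_{i^\star}$; by minimality, and because frozen coordinates are always decoded correctly, $\hat U_j=U_j$ for all $j<i^\star$, so the block-error event is contained in $\bigcup_{i\in\overline F}\mathcal E_i$. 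The union bound then gives $\Pr[\text{block error}]\le\sum_{i\in\overline F}\Pr[\mathcal E_i]$.

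The key step is the estimate $\Pr[\mathcal E_i]\le Z(W_n^{(i)})$. On $\mathcal E_i$ we have $\hat U_0^{i-1}=U_0^{i-1}$, so when the algorithm reaches coordinate $i$ it is computing the likelihood ratio of $U_i$ given the \emph{true} data $(Y_0^{N-1},U_0^{i-1})$ and rounding to the more likely value (ties broken to $1$ on line~\ref{line:prob-calc}). By the definition of the induced channel $\wni$ and the Bayes manipulation recorded in the remark after Algorithm~\ref{alg:sc-decoder} (valid since $U_i$ has a uniform prior, so the prior factors cancel), the pair $(Y_0^{N-1},U_0^{i-1})$ is distributed as the output of $\wni$ on the uniform input bit $U_i$, and line~\ref{line:prob-calc} implements a maximum-likelihood decision for $\wni$. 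Hence
\[
\Pr[\mathcal E_i]\le\Pr[\hat U_i\neq U_i\mid \hat U_0^{i-1}=U_0^{i-1}]\cdot\Pr[\hat U_0^{i-1}=U_0^{i-1}]\le P_e(\wni),
\]
where the particular tie-breaking of the algorithm can only do (weakly) better than the tie-as-error convention built into the definition of $P_e$. Combining with \eqref{eq:P_e<=Z} yields $\Pr[\mathcal E_i]\le Z(\wni)$, and summing over $i\in\overline F$ proves the lemma.

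The only genuine subtleties — the points I expect to be the ``hard part,'' though they are conceptual rather than computational — are (i) that the successive cancellation decoder acts on its own previous, possibly corrupted, decisions rather than on the true $U_0^{i-1}$: this is dispatched cleanly by the first-error decomposition, since on $\mathcal E_i$ the conditioning is exactly on a correct prefix; and (ii) the identification of the decision rule on line~\ref{line:prob-calc} with maximum-likelihood decoding of $\wni$, including the observation that $P_e(\wni)$ as defined (counting ties as errors) dominates the true error probability under any deterministic tie-breaking. Everything else is the union bound and the inequality $P_e(\cdot)\le Z(\cdot)$ already established in \eqref{eq:P_e<=Z}.
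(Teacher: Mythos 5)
Your proposal is correct and follows essentially the same route as the paper: the paper's (very terse) justification is exactly the per-bit bound $\Pr[\text{first error at } i] \le P_e(\wni) \le Z(\wni)$ via \eqref{eq:P_e<=Z} followed by a union bound over $i \in \overline{F}$. You have simply spelled out the first-error decomposition and the tie-breaking point that the paper leaves implicit.
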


\subsection{Bounds on $Z(W^-)$ and $Z(W^+)$}
A proof of these bounds can be found in \cite{arikan-polar,korada-thesis}, and the results are rederived in Appendix \ref{app:Z-evolve} for clarity and completeness. 
\begin{proposition}
$
  Z(W^+) = Z(W)^2
$
for all binary symmetric channels $W$.
\label{p:zplus}
\end{proposition}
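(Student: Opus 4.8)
The plan is to verify the identity by a direct expansion of the definition of the Bhattacharyya parameter, exploiting the product structure of $W^+$. First I would write out $Z(W^+)$ as a sum over the output alphabet of $W^+$, which by \eqref{eq:w+} is $\Y \times \Y \times \B$; that is, a triple sum over $(y_1, y_2, x_1)$ of $\sqrt{W^+(y_1,y_2,x_1\mid 0)\,W^+(y_1,y_2,x_1\mid 1)}$. Substituting the two relevant values $W^+(y_1,y_2,x_1\mid 0) = \tfrac12 W(y_1\mid x_1)W(y_2\mid 0)$ and $W^+(y_1,y_2,x_1\mid 1) = \tfrac12 W(y_1\mid x_1\oplus 1)W(y_2\mid 1)$ obtained from \eqref{eq:w+}, each summand factors as $\tfrac12\sqrt{W(y_1\mid x_1)\,W(y_1\mid x_1\oplus 1)}\cdot\sqrt{W(y_2\mid 0)\,W(y_2\mid 1)}$, a product of a term depending only on $(y_1,x_1)$ and a term depending only on $y_2$.

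Then I would separate the triple sum into a product of one-variable sums. The sum over $y_2$ is exactly $Z(W)$ by definition. For the sum over $y_1$, observe that for each fixed $x_1 \in \B$ the quantity $\sum_{y_1}\sqrt{W(y_1\mid x_1)\,W(y_1\mid x_1\oplus 1)}$ is again just $Z(W)$ (the choices $x_1 = 0$ and $x_1 = 1$ give the same sum, namely $\sum_{y_1}\sqrt{W(y_1\mid 0)W(y_1\mid 1)}$), so the outer sum over the two values of $x_1$ contributes a factor $2$ that cancels the $\tfrac12$. Collecting terms gives $Z(W^+) = \tfrac12 \cdot 2 Z(W) \cdot Z(W) = Z(W)^2$, as claimed.

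There is no real obstacle here; the only point requiring care is the bookkeeping of the output alphabet of $W^+$ — in particular, remembering that the extra coordinate $x_1$ is part of the received symbol and must be summed over, and that it is precisely this sum over the two values of $x_1$ that absorbs the factor $\tfrac12$ appearing in \eqref{eq:w+}. I note in passing that the argument nowhere uses the symmetry of $W$, so the identity in fact holds for every binary-input memoryless channel; the symmetry hypothesis is retained only because symmetric channels are the setting of interest in this paper.
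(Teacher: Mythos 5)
Your proof is correct and is essentially the same computation as the paper's: expand $Z(W^+)$ over the output alphabet $\Y^2\times\B$, factor each summand using \eqref{eq:w+}, and observe that the sum over $y_2$ gives $Z(W)$, the sum over $y_1$ gives $Z(W)$ for each fixed $x_1$, and the sum over $x_1$ cancels the $\tfrac12$. Your closing remark that symmetry of $W$ is never used is also accurate.
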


\begin{proposition}
\(\displaystyle
  Z(W^-) \le 2Z(W) - Z(W)^2
\)
for all binary symmetric channels $W$, with equality if the channel $W$ is an erasure channel.
\label{p:zminus}
\end{proposition}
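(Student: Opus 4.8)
The plan is to expand both sides from the definitions and reduce the proposition to a single pointwise inequality, indexed by pairs of output symbols. Fix $y_1,y_2\in\Y$ and abbreviate $a=W(y_1\mid 0)$, $b=W(y_1\mid 1)$, $c=W(y_2\mid 0)$, $d=W(y_2\mid 1)$. From \eqref{eq:w-} one has $W^-(y_1,y_2\mid 0)=\tfrac12(ac+bd)$ and $W^-(y_1,y_2\mid 1)=\tfrac12(ad+bc)$, so
\[
Z(W^-)=\tfrac12\sum_{y_1,y_2}\sqrt{(ac+bd)(ad+bc)}\,.
\]
On the other side, since $\sum_{y}W(y\mid 0)=\sum_y W(y\mid 1)=1$, one gets the ``mass identities'' $2Z(W)=\sum_{y_1,y_2}\sqrt{ab}\,(c+d)=\sum_{y_1,y_2}\sqrt{cd}\,(a+b)$ and $Z(W)^2=\sum_{y_1,y_2}\sqrt{abcd}$, whence
\[
2Z(W)-Z(W)^2=\tfrac12\sum_{y_1,y_2}\Bigl(\sqrt{ab}\,(c+d)+\sqrt{cd}\,(a+b)-2\sqrt{abcd}\Bigr)\,.
\]
So it is enough to prove, for all reals $a,b,c,d\ge 0$, the pointwise bound
\[
\sqrt{(ac+bd)(ad+bc)}\ \le\ \sqrt{ab}\,(c+d)+\sqrt{cd}\,(a+b)-2\sqrt{abcd}\,.
\]

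The crux is a substitution that linearizes the products $ac+bd$ and $ad+bc$. Put $S=\sqrt{ac}+\sqrt{bd}$, $T=\sqrt{ad}+\sqrt{bc}$ and $m=\sqrt{abcd}$, all $\ge 0$. Expanding squares gives $S^2-2m=ac+bd$ and $T^2-2m=ad+bc$, and multiplying out gives $ST=(a+b)\sqrt{cd}+(c+d)\sqrt{ab}$; thus the pointwise bound is precisely
\[
\sqrt{(S^2-2m)(T^2-2m)}\ \le\ ST-2m\,.
\]
Here $S^2-2m=ac+bd\ge 0$ and $T^2-2m=ad+bc\ge 0$, and $ST=(a+b)\sqrt{cd}+(c+d)\sqrt{ab}\ge 2\sqrt{abcd}=2m$ by AM--GM on each summand, so both sides are nonnegative and squaring is valid. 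After squaring, the terms $S^2T^2$ and $4m^2$ cancel and the inequality reduces to $-2m(S^2+T^2)\le -4mST$, i.e.\ to $m\,(S-T)^2\ge 0$, which is obvious. Summing the pointwise bound over $y_1,y_2$ and applying the mass identities then gives $Z(W^-)\le 2Z(W)-Z(W)^2$. (Symmetry of $W$ plays no role here beyond being the ambient hypothesis of the paper.)

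For the equality statement, the squaring step shows the pointwise bound is tight exactly when $m=0$ or $S=T$, and $S=T$ is equivalent to $(\sqrt a-\sqrt b)(\sqrt c-\sqrt d)=0$. If $W$ is an erasure channel, every output symbol $y$ satisfies either $W(y\mid 0)W(y\mid 1)=0$ (the non-erasure symbols) or $W(y\mid 0)=W(y\mid 1)$ (the erasure symbol). Hence for every pair $(y_1,y_2)$ we have $m=0$, or else $\sqrt a=\sqrt b$ and $\sqrt c=\sqrt d$; either way the pointwise bound is an equality, so the summed inequality is too. I expect the only real obstacle to be finding the substitution $S,T,m$ (equivalently, the identities $S^2-2m=ac+bd$, $T^2-2m=ad+bc$, $ST=(a+b)\sqrt{cd}+(c+d)\sqrt{ab}$); a naive attempt via $\sqrt{x+y}\le\sqrt x+\sqrt y$ applied to $(ac+bd)(ad+bc)=ab(c+d)^2+cd(a-b)^2$ is too lossy to close the gap. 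Once the substitution is in place, the rest is routine.
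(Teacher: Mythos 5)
Your argument is correct and is essentially the paper's proof: the paper likewise expands $Z(W^-)$ over output pairs $(y_1,y_2)$ and reduces to the very same pointwise inequality, which after dividing through by $\sqrt{abcd}$ (in your notation) becomes $\sqrt{f(y_1)^2+f(y_2)^2-4}\le f(y_1)+f(y_2)-2$ for $f(y)=\sqrt{W(y|0)/W(y|1)}+\sqrt{W(y|1)/W(y|0)}$, established there via the elementary bound $\sqrt{x+y-z}\le\sqrt{x}+\sqrt{y}-\sqrt{z}$ (valid for $x,y\ge z\ge 0$) together with $\E_{y\sim p}f(y)=2/Z(W)$ for $p(y)\propto\sqrt{W(y|0)W(y|1)}$. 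Your $S,T,m$ substitution is simply a different (and arguably cleaner) way of verifying that same pointwise inequality --- it avoids dividing by transition probabilities and so dispenses with the paper's side remark about symbols with $W(y|0)W(y|1)=0$ --- and your equality analysis for erasure channels matches the paper's.
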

\section{Speed of polarization}
\label{s:speed}
Our first goal is to show that for some $m = \O(\log (1/\epsilon))$, we have that $\Pr_i[Z(W^{(i)}_m) \le 2^{-\O(m)}] \ge I(W) - \epsilon$ (the channel is ``roughly'' polarized). We will then use this rough polarization result to show that, for some $n = \O(\log(1/\epsilon))$, ``fine'' polarization occurs: $\Pr_i[Z(W^{(i)}_n) \le 2^{-2^{\beta n}}] \ge I(W) - \epsilon$. This approach is similar to the bootstrapping method used in \cite{arikan-telatar-arxiv}.
  
\subsection{Rough polarization}
We give a formal statement of rough polarization in the proposition below. A similar statement can be constructed for binary erasure channels (as opposed to general symmetric channels) with a much simpler proof; we include the statement and the simpler analysis in Appendix \ref{s:binary-erasure-rough}

\begin{proposition}
\label{p:bsc-rough}
There is an absolute constant $\Lambda < 1$ such that the following holds. For all $\rho \in (\Lambda, 1)$, there exists a constant $c_\rho$ such that for all binary-input symmetric channels $W$, all $\eps > 0$ and $m \ge b_\rho \log(1/\epsilon)$, there exists a {\em roughly polarized} set 
\begin{equation}
  \label{eq:W_r}
  \W_r \subset \W \triangleq \{\wmi : 0 \le i \le 2^{m}-1\}
\end{equation}such that for all $M \in \W_r$, $Z(M) \le 2 \rho^m$ and $\Pr_i(W^{(i)}_m \in \W_r) \ge I(W) - \epsilon.$
\end{proposition}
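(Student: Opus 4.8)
The plan is to work with the entropy martingale $H_n$ rather than the Bhattacharyya parameters directly, exploiting that the entropy is conserved exactly: $H(W^-) + H(W^+) = 2H(W)$ (the analogue of \eqref{eq:I-conserved}, since $H = 1 - I$). So the sequence of random variables $H_0, H_1, H_2, \dots$, where $H_n$ is the entropy $H(\wni)$ of a uniformly random subchannel at level $n$, is a genuine martingale with $H_{n+1} = H_n \pm \alpha_n$ for some $\alpha_n \ge 0$, each sign with probability $1/2$. The first key step is a \emph{separation lemma}: whenever $H_n = h$, the two children entropies differ by at least (say) $\tfrac34 h(1-h)$, i.e. $\alpha_n \ge \tfrac34 H_n(1-H_n)$. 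This is a statement purely about a single channel and the $\pm$ transforms, provable by comparing $H(W^-)$ and $H(W^+)$ to $H(W)$ using standard entropy inequalities (one can bound $H(W^+)$ above and $H(W^-)$ below, or argue via the relation between $H$ and $Z$ together with Propositions \ref{p:zplus} and \ref{p:zminus}); I expect this to require a careful but elementary calculation and it is the main technical obstacle.

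Given the separation lemma, the second step is to show that the potential $\Phi_n \triangleq \E[\sqrt{H_n(1-H_n)}]$ contracts by a constant factor at each level. Fix a level and condition on $H_n = h$; the children are $h \pm \alpha$ with $\alpha \ge \tfrac34 h(1-h)$. A short convexity/Taylor estimate shows that $\tfrac12\big(\sqrt{(h+\alpha)(1-h-\alpha)} + \sqrt{(h-\alpha)(1-h+\alpha)}\big) \le \theta \sqrt{h(1-h)}$ for some absolute $\theta < 1$ (the concavity of $\sqrt{\,\cdot\,}$ together with the fact that the two arguments differ multiplicatively by a bounded-away-from-$1$ factor forces a definite drop; one has to handle the edge cases $h$ near $0$ or $1$, where $\sqrt{(h\pm\alpha)(1-h\mp\alpha)}$ can be small or even the argument negative, separately, but there $\sqrt{h(1-h)}$ is already tiny). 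Taking expectations gives $\Phi_{n+1} \le \theta\,\Phi_n$, hence $\Phi_m \le \theta^m \Phi_0 \le \theta^m$. Since $\sqrt{H_m(1-H_m)} \ge \min(H_m, 1-H_m)$ when the other factor is $\ge 1/2$, Markov's inequality then yields $\Pr_i[H_m \in (\delta, 1-\delta)] \le \Phi_m/\sqrt{\delta(1-\delta)} \le O(\theta^m/\sqrt\delta)$.

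The third step converts this entropy polarization into the claimed statement about $Z$. Using $Z(M)^2 \le H(M) \le Z(M)$ from \eqref{eq:ZversusH}: if $H(M) \le \delta$ then $Z(M) \le \sqrt\delta$, and if $H(M) \ge 1-\delta$ then $Z(M) \ge$ (something bounded away from $0$), i.e. that subchannel is ``bad.'' Set $\rho \in (\Lambda, 1)$ with $\Lambda = \sqrt\theta$ (say), choose $\delta = \rho^{2m}$ so that $\sqrt\delta = \rho^m \le 2\rho^m$, and then $\Pr_i[H(\wmi) \ge 1-\delta] \le O(\theta^m/\sqrt\delta) = O((\theta/\rho)^m) \to 0$; choosing $m \ge b_\rho\log(1/\eps)$ with $b_\rho = O(1/\log(\rho^2/\theta))$ makes this at most $\eps/2$ (and similarly the middle band is $\le \eps/2$). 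The roughly polarized set $\W_r$ is then $\{\wmi : H(\wmi) \le \delta\} = \{\wmi : Z(\wmi) \le 2\rho^m\}$, and since the total entropy is conserved, $\sum_i H(\wmi) = 2^m H(W) = 2^m(1-I(W))$, so the fraction of indices with entropy $\le \delta$ is at least $I(W)$ minus the $\le \eps$ fraction in the bad-or-middle region; hence $\Pr_i(\wmi \in \W_r) \ge I(W) - \eps$. This gives the proposition with $\Lambda$, $b_\rho$, $c_\rho$ as absolute/ $\rho$-dependent constants as required.
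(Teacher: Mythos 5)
Your proposal follows essentially the same route as the paper: your separation bound $\alpha \ge \tfrac34 H(1-H)$ is exactly the paper's Lemma~\ref{l:h-a-bound}, the contraction of the potential $\E[\sqrt{H_n(1-H_n)}]$ is Lemma~\ref{l:mbound}, and the Markov-inequality-plus-conservation-of-entropy endgame matches the paper's argument with the sets $A_\rho^g$, $A_\rho^b$ and the final passage from $H$ to $Z$ via \eqref{eq:ZversusH}. The only caveat is that the separation lemma, which you rightly flag as the main technical obstacle, is established in the paper via a Mrs.\ Gerber-type inequality (\cite[Lemma 2.2]{sasoglu-book}) applied to $H(X_1+X_2\mid Y_1,Y_2)-H(X_1\mid Y_1)$, not via the $Z$--$H$ relations of Propositions~\ref{p:zplus} and~\ref{p:zminus}, which are too lossy to yield a bound proportional to $H(1-H)$.
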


We first offer the following quantitative bound on the evolution of each step of the polarization process.
\begin{lemma}
  For all channels $W$, we have
    $H(W^+) \le H(W) - \alpha(W)$
  and
    $H(W^-) \ge H(W) + \alpha(W)$
  for $\alpha(W) = \theta H(W) (1-H(W))$, where $\theta$ is a constant greater than $3/4$.
  \label{l:h-a-bound}
  \label{l:thetabound}
\end{lemma}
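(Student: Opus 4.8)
\textbf{Proof proposal for Lemma~\ref{l:h-a-bound}.}

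The plan is to reduce the claim about the general channel transformations $W^-, W^+$ to a statement about a single two-state ``effective'' channel, and then verify the inequality by a direct calculation. First, since $H(W^+) + H(W^-) = 2H(W)$ (this is just $1 - I$ applied to \eqref{eq:I-conserved}), the two inequalities $H(W^+) \le H(W) - \alpha(W)$ and $H(W^-) \ge H(W) + \alpha(W)$ are equivalent: proving either one gives the other for free. So I would focus on the $W^+$ side, i.e.\ on lower-bounding the \emph{entropy gap} $H(W) - H(W^+) = H(W^-) - H(W)$, which by symmetry equals $\tfrac12\bigl(H(W^-) - H(W^+)\bigr)$. Thus it suffices to show $H(W^-) - H(W^+) \ge 2\theta\, H(W)(1-H(W))$ for some $\theta > 3/4$.

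The next step is to get a clean handle on $H(W^-) - H(W^+)$. Writing $X_1, X_2$ for the two independent uniform input bits and $Y_1, Y_2$ for the corresponding outputs, the chain-rule computation in the text gives $H(W^-) = H(X_1 \mid Y_1 Y_2)$ and $H(W^+) = H(X_2 \mid Y_1 Y_2 X_1)$, and $H(W^-)+H(W^+) = 2H(W)$. The difference $H(W^-)-H(W^+) = H(X_1\mid Y_1Y_2) - H(X_2\mid Y_1Y_2X_1)$ should be expressible purely in terms of the posterior distribution of a single input bit of $W$ given its output. Concretely, let $p = p(Y)$ denote the (random) posterior probability $\Pr(X=0\mid Y)$ for one use of $W$; then $H(W) = \E[h(p)]$ where $h$ is binary entropy, and after working through the $W^\pm$ definitions the two quantities $H(W^-)$ and $H(W^+)$ become expectations of $h$ evaluated at convolved/multiplied combinations of two independent copies $p, p'$ of this posterior variable. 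I expect $H(W^-) = \E[h(p * p')]$ and $H(W^+) = \E[h(\text{pointwise product, renormalized})]$-type expressions, where $p * p' = p p' + (1-p)(1-p')$. The key reduction is then a pointwise inequality: for all $a, b \in [0,1]$,
\[
 h(a * b) - \E_{\text{(sign)}}\bigl[h(\text{product version})\bigr] \;\ge\; \theta\bigl(h(a) + h(b)\bigr)\bigl(1 - \tfrac{h(a)+h(b)}{2}\bigr)\text{-ish},
\]
but I would rather aim for the cleaner two-variable statement that directly yields the lemma after taking expectations and using concavity/Jensen where needed.

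The cleanest route, and the one I would actually pursue, is to prove the following \emph{single-parameter} inequality and then lift it: for $t\in[0,1]$, if a symmetric binary-input channel has $H(W)=t$, then among all such channels the entropy gap $H(W)-H(W^+)$ is minimized by the binary \emph{erasure} channel with erasure probability $t$ (erasure channels are the ``extremal'' cases here, consistent with the footnote that they give equality in Proposition~\ref{p:zminus}). For the BEC with erasure probability $z$, one computes directly that $W^-$ is a BEC with parameter $2z-z^2$ and $W^+$ is a BEC with parameter $z^2$, so $H(W)-H(W^+) = z - z^2 = z(1-z) = H(W)(1-H(W))$ exactly. Since $1 > 3/4$, this gives the claimed bound with room to spare \emph{for BECs}; the content is the extremality claim for general channels. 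To prove extremality I would use the standard representation of a symmetric channel as a convex combination (mixture) of BSCs, reduce to the BSC case, and on BSCs verify the inequality $h(p)-h(p^2+(1-p)^2)\cdot(\dots) \ge \tfrac34 h_{\mathrm{BSC}}(1-h_{\mathrm{BSC}})$ by an explicit, if slightly tedious, one-variable calculus argument (checking the ratio at the endpoints $p\to 1/2$ and $p\to 0$ and monotonicity in between).

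The main obstacle I anticipate is exactly this last extremality/one-variable estimate: showing that the constant $3/4$ is valid uniformly over all symmetric channels, not just BECs and BSCs, and pinning down where the worst case occurs. The $H(W)(1-H(W))$ factor is natural (it vanishes at the polarized endpoints, as it must), but the constant in front requires care — one has to rule out channels where the gap degrades below $\tfrac34 H(1-H)$, and the mixture-of-BSCs reduction plus Jensen is the tool I would lean on, since $H$ and the gap are both ``averaged'' quantities over the BSC decomposition while $H(1-H)$ is concave, so Jensen pushes in the favorable direction. If the mixture reduction turns out to be awkward for the \emph{difference} $H(W^-)-H(W^+)$ (it is not obviously linear in the mixture), the fallback is to work directly with the posterior variable $p(Y)$ and prove the needed pointwise two-variable inequality for $h$ by brute-force calculus, which is less elegant but self-contained.
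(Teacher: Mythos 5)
Your opening reduction is correct and matches the paper: since $H(W^+)+H(W^-)=2H(W)$ by \eqref{eq:I-conserved}, it suffices to lower-bound $H(W^-)-H(W)$ by $\theta H(W)(1-H(W))$. The one-variable BSC computation you end with is also the right calculation (it is exactly the quantity \eqref{eq:h-change-bound} that the paper bounds numerically and in Appendix~\ref{s:h-change-bound}). But there are two genuine problems in the middle of your argument.

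First, your extremality claim is backwards. You assert that among channels with a given entropy the BEC \emph{minimizes} the gap $H(W^-)-H(W)$, citing the footnote about erasure channels achieving equality in Proposition~\ref{p:zminus}. That footnote concerns the Bhattacharyya recursion $Z(W^-)\le 2Z(W)-Z(W)^2$, for which the BEC is indeed the worst case; it says nothing about the entropy gap. For the entropy gap the BEC is the \emph{best} case: it gives ratio exactly $1$, whereas the BSC with crossover $x$ gives $\bigl(h(2x(1-x))-h(x)\bigr)/\bigl(h(x)(1-h(x))\bigr)$, whose infimum over $x\in(0,1/2)$ is about $0.799<1$. This is precisely why the lemma only claims $\theta>3/4$ rather than $\theta=1$. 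So "BEC is extremal, hence done with room to spare" is false, and the entire weight of the proof falls on the general-channel-to-BSC reduction you defer.

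Second, that reduction is not obtainable by the Jensen argument you sketch. Writing a symmetric channel as a mixture of BSCs, $H(W)=\E[h(p)]$ is linear in the mixture but $H(W^-)=\E_{p,q}[h\bigl(p(1-q)+q(1-p)\bigr)]$ involves an independent pair of mixture components, and the inequality you need, namely $\E_{p,q}[h(p\ast q)]\ge h\bigl(2h^{-1}(\alpha)(1-h^{-1}(\alpha))\bigr)$ with $\alpha=\E[h(p)]$, is equivalent to the convexity of $\alpha\mapsto h\bigl(h^{-1}(\alpha)\ast h^{-1}(\beta)\bigr)$ — this is Mrs.\ Gerber's Lemma, a nontrivial fact that ordinary concavity of $h$ does not give you (and which your "Jensen pushes in the favorable direction" does not establish). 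The paper handles exactly this step by invoking \cite[Lemma 2.2]{sasoglu-book}, which delivers $H(X_1+X_2\mid Y_1Y_2)-H(X_1\mid Y_1)\ge h\bigl(2h^{-1}(\alpha)(1-h^{-1}(\alpha))\bigr)-\alpha$ directly; only after that does the one-variable calculus over $x\in(0,1/2)$ come in. To complete your proof you would need to either cite such a lemma or prove the two-variable pointwise inequality you allude to, which you have not correctly identified or carried out.
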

\begin{proof}
  Let \[
    \hat \theta = \inf_{W} \frac{ H(W^-) - H(W)  }{H(W) (1 - H(W))},
  \]
  where the minimization is done over all binary-input symmetric channels $W$. Expanding the definition of the $^-$ transform, obtain
  \begin{equation}
    H(W^-) - H(W) = H(X_1 + X_2|Y_1, Y_2) - H(X_1|Y_1)
    \label{eq:expand-channels}
  \end{equation}
  where $X_1, X_2$ are uniformly distributed random bits, $Y_1$ and $Y_2$ in the first expression are distributed according to the transition probabilities of $W^-$ and $Y_1$ is distributed according to the transition probabilities of $W$.

  \cite[Lemma 2.2]{sasoglu-book} implies that if $(X_1, Y_1)$ and $(X_2, Y_2)$ are independent pairs of discrete random variables with $X_1, X_2 \in \B$ and $H(X_1|Y_1) = H(X_2|Y_2) = \alpha$, we have
  \[
    H(X_1 + X_2|Y_1, Y_2) - H(X_1 | Y_1) \ge \epsilon(\alpha),
  \]
  where $\epsilon(\alpha) = h(2h^{-1}(\alpha)(1-h^{-1}(\alpha))) - \alpha$ (here, $h$ is the binary entropy function and $h^{-1}$ is its inverse). Substituting the above in \eqref{eq:expand-channels}, we can write
  \begin{equation}
    H(W^-) - H(W) \ge h(2 h^{-1}(\alpha)(1-h^{-1}(\alpha))) - \alpha.
    \label{eq:hinv-sasoglu}
  \end{equation}

  We can therefore bound the desired expression by numerically minimizing the expression
  \begin{equation}
    \frac{h(2x(1-x)) - h(x)}{h(x)(1-h(x))}
    \label{eq:h-change-bound}
  \end{equation}
  over $x \in (0, 1/2)$ (the range of $h^{-1}$), which offers us $\hat \theta > .799$. We also derive an analytic bound on \eqref{eq:h-change-bound} in Appendix \ref{s:h-change-bound}.

  Since mutual information is conserved in our transformation (as stated in Equation \eqref{eq:I-conserved}), we can conclude the lemma, as any $\theta < \hat \theta$ suffices for the statement to be true.
\end{proof}
We define the symmetric entropy of a channel as
\[
  T(W) = H(W)(1-H(W)).
\]

To relate $T(\wni)$ back to $H(\wni)$, it is useful to define the sets (where $\rho \in (0,1)$): 
\begin{align*}
\label{eq:bad-good-sets}
 &A^g_\rho =  \left\{ i : H(\wni) \le \frac{1 -\sqrt{1 - 4 \rho^n}}{2} \right\},\\
 &A^b_\rho = \left\{ i : H(\wni) \ge \frac{1 + \sqrt{1 - 4 \rho^n}}{2} \right\} \ , ~~~\text{and}\\
 &A_\rho = A_\rho^g \cup A_\rho^b = \{ i : T(\wni) \le \rho^n \} \ .
\end{align*}
We associate $A_\rho^g$ with the ``good'' set (the set of $i$ such that the entropy, and therefore probability of misdecoding, is small) and $A_\rho^b$ with the ``bad'' set.
We record the following useful approximations, both of which follow from $\sqrt{1-4\rho^n} \ge 1-4\rho^n$.
\begin{fact}
\label{fact:wni-approx}
For $i \in A^g_\rho$, $H(\wni) \le 2 \rho^n$, and for $i \in A^b_\rho$, $H(\wni) \ge 1 - 2 \rho^n$.
\end{fact}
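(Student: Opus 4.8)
The plan is to prove Fact~\ref{fact:wni-approx} by a direct algebraic manipulation of the inequalities defining $A^g_\rho$ and $A^b_\rho$, using only the elementary estimate $\sqrt t \ge t$ valid for $t\in[0,1]$, exactly as hinted in the sentence preceding the statement. No probabilistic or information-theoretic input is needed; the entire content is a one-line bound on each side of the quadratic $H(1-H)\le\rho^n$.

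First I would record the (implicit) hypothesis that makes the sets well defined, namely $\rho^n\le 1/4$, so that $1-4\rho^n\in[0,1]$ and the square root $\sqrt{1-4\rho^n}$ is real; this is always the regime in which the fact is applied, since $A_\rho=A^g_\rho\cup A^b_\rho=\{i:T(\wni)\le\rho^n\}$ and $T(W)=H(W)(1-H(W))\le 1/4$ in general, while the statement is only used when $\rho^n$ is genuinely small. (For completeness one can also note here that the displayed equality $A^g_\rho\cup A^b_\rho=\{i:T(\wni)\le\rho^n\}$ is just the statement that $\tfrac{1\mp\sqrt{1-4\rho^n}}{2}$ are the two roots of $H^2-H+\rho^n=0$, so that $H(1-H)\le\rho^n$ iff $H$ lies outside the open interval between them.) Under $\rho^n\le 1/4$ we have $1-4\rho^n\in[0,1]$, hence $\sqrt{1-4\rho^n}\ge 1-4\rho^n$.

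For $i\in A^g_\rho$ this gives
\[
  H(\wni)\ \le\ \frac{1-\sqrt{1-4\rho^n}}{2}\ \le\ \frac{1-(1-4\rho^n)}{2}\ =\ 2\rho^n,
\]
and for $i\in A^b_\rho$ it gives
\[
  H(\wni)\ \ge\ \frac{1+\sqrt{1-4\rho^n}}{2}\ \ge\ \frac{1+(1-4\rho^n)}{2}\ =\ 1-2\rho^n,
\]
which are the two asserted bounds. I would then be done.

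There is no real obstacle here; the only point requiring a word of care is the domain restriction $\rho^n\le 1/4$, which is needed both for the square roots to be real and for the bound $\sqrt t\ge t$ to apply, and which holds throughout the range of parameters in which Proposition~\ref{p:bsc-rough} and its proof invoke these sets.
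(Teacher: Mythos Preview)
Your proof is correct and matches the paper's own justification exactly: the paper states that both bounds ``follow from $\sqrt{1-4\rho^n}\ge 1-4\rho^n$,'' which is precisely the inequality $\sqrt{t}\ge t$ on $[0,1]$ that you apply. Your added remark on the domain restriction $\rho^n\le 1/4$ is a reasonable clarification that the paper leaves implicit.
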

  We first state a bound on the evolution of $\sqrt{T(W_{n+1}^{(i)})}$.
  \begin{lemma}
    There exists a universal constant $\Lambda < 1$ such that
    \[
    \E_{i \bmod 2} \sqrt{T(W_{n+1}^{(i)})} \le \Lambda \sqrt{T(\wnip)} \ ,
    \]
    where the meaning of the expectation is that we fix $\lfloor i/2 \rfloor$ and allow $i \bmod 2$ to vary.
    \label{l:mbound}
  \end{lemma}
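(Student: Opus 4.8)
The plan is to fix $j = \lfloor i/2 \rfloor$, write $W = W_n^{(j)}$, and compare the average $\frac12\bigl(\sqrt{T(W^-)} + \sqrt{T(W^+)}\bigr)$ against $\sqrt{T(W)}$, showing the ratio is bounded by some universal $\Lambda < 1$. Write $H = H(W)$. Since $H$ is a martingale under the $\pm$ split (Equation~\eqref{eq:I-conserved} gives $H(W^-)+H(W^+)=2H(W)$), Lemma~\ref{l:h-a-bound} says the two children are $H(W^-) = H + \alpha$ and $H(W^+) = H - \alpha$ with $\alpha \ge \theta H(1-H)$ for a constant $\theta > 3/4$; also $0 \le \alpha \le \min(H,1-H)$ since entropies lie in $[0,1]$. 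So $T(W^\pm) = (H\pm\alpha)(1-H\mp\alpha)$, and I want to bound
\[
  g(H,\alpha) \;\triangleq\; \frac{\tfrac12\Bigl(\sqrt{(H+\alpha)(1-H-\alpha)} + \sqrt{(H-\alpha)(1-H+\alpha)}\Bigr)}{\sqrt{H(1-H)}}
\]
uniformly over the feasible region $H \in (0,1)$, $\theta H(1-H) \le \alpha \le \min(H,1-H)$.

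The key observation is that $g$ is \emph{decreasing in $\alpha$}: separating the two children pulls $\sqrt{T}$ down on average because $t \mapsto \sqrt{t}$ is concave and the map from $\alpha$ to the pair $\bigl(T(W^+),T(W^-)\bigr)$ spreads the arguments apart as $\alpha$ grows (one can check $\partial_\alpha\bigl[\sqrt{T(W^+)}+\sqrt{T(W^-)}\bigr] \le 0$ directly, or note $T(W^\pm)$ are the two values of a concave function of $H\pm\alpha$ evaluated symmetrically about $H$, so their sum of square roots decreases as the spread increases). Hence $g(H,\alpha) \le g\bigl(H,\theta H(1-H)\bigr)$, and it suffices to bound the latter. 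Substituting $\alpha = \theta H(1-H)$ and letting $s = H(1-H) \in (0,1/4]$, the numerator becomes $\tfrac12\bigl(\sqrt{s - \theta s(1-2H) - \theta^2 s^2}\, \cdots\bigr)$ — a short computation shows $T(W^+)T(W^-) = s^2(1-\theta)^2 \cdot(\text{something})$... more cleanly, $\sqrt{T(W^+)}+\sqrt{T(W^-)} \le \sqrt{2\bigl(T(W^+)+T(W^-)\bigr)}$ and $T(W^+)+T(W^-) = 2s - 2\alpha^2 \le 2s(1-\theta^2 s) \le 2s(1-\tfrac34\cdot\tfrac{??}{})$. To get a clean constant I would instead keep the exact two-term sum: with $\alpha=\theta H(1-H)$ one finds $T(W^\pm) = H(1-H)\bigl(1 \mp \theta(1-2H) - \theta^2 H(1-H)\bigr)$, so $g \le \tfrac12\bigl(\sqrt{1-\theta(1-2H)-\theta^2 s} + \sqrt{1+\theta(1-2H)-\theta^2 s}\bigr) \le \sqrt{1 - \theta^2 s/1}\cdot(\text{factor} < 1)$; using $\sqrt{a}+\sqrt{b} \le \sqrt{2(a+b)}$ when additionally $|a-b|$ is bounded away from $0$ relative to $a+b$, or directly maximizing the one-variable function of $H$, yields $g \le \sqrt{1 - \theta^2/4}\cdot\sqrt{2}$... which is not obviously $<1$, so the two-term structure (not the crude $\sqrt{2(a+b)}$ bound) is essential near $H=1/2$. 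At $H = 1/2$ exactly, $g = \tfrac12\bigl(\sqrt{1-\theta^2/4}+\sqrt{1-\theta^2/4}\bigr) = \sqrt{1-\theta^2/4} < 1$, and for $H$ away from $1/2$ the worst-case $\alpha = \theta H(1-H)$ is small relative to $\min(H,1-H)$ so $g$ is even closer to... actually it could approach $1$; one must also use that when $\alpha$ is allowed to be as large as $\min(H,1-H)$ one child has $T=0$. So: split into the regime where the upper bound $\min(H,1-H)$ on $\alpha$ is active vs. not, and in each verify $g \le \Lambda$ for an explicit $\Lambda < 1$ by one-variable calculus/monotonicity.

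The main obstacle is making the bound \emph{uniform} and bounded strictly below $1$ across the whole feasible region, in particular near the corners $H \to 0$ and $H \to 1$ where $s = H(1-H) \to 0$ and the guaranteed gap $\alpha \ge \theta s$ becomes tiny. There, $g\bigl(H,\theta s\bigr) \to 1$, which would be fatal — so the lemma as a per-step contraction of $\sqrt{T}$ cannot rely on Lemma~\ref{l:h-a-bound} alone in that corner. I expect the resolution is that near the corners one instead uses the \emph{multiplicative} behavior: by Proposition~\ref{p:zplus}, $Z(W^+) = Z(W)^2$, and combined with $Z^2 \le H \le Z$ this forces $H(W^+) \le Z(W)^2 \le H(W)$, in fact $H(W^+) \le H(W)^{...}$ — giving $T(W^+) \ll T(W)$ when $H$ is small; symmetrically $H(W^-)$ is pushed toward $1$ when $H$ is large, shrinking $T(W^-)$. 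So the real proof splits the range of $H$ into (i) a middle band $[\delta, 1-\delta]$ where Lemma~\ref{l:h-a-bound}'s additive gap gives contraction, and (ii) tails $H < \delta$ or $H > 1-\delta$ where the quadratic $Z$-relations give contraction; then $\Lambda$ is the max of the two regime constants. I would carry out: (1) reduce to bounding $g(H,\alpha)$, noting monotonicity in $\alpha$; (2) handle the middle band via Lemma~\ref{l:h-a-bound} and the exact two-term expression; (3) handle the tails via Propositions~\ref{p:zplus}/\ref{p:zminus} and \eqref{eq:ZversusH}; (4) take $\Lambda$ to be the max and check $\Lambda<1$.
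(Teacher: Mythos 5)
Your skeleton --- fix the parent $W=W_n^{(\lfloor i/2\rfloor)}$, write the children's entropies as $H\pm\alpha$ with $\alpha\ge\theta H(1-H)$ from Lemma~\ref{l:h-a-bound}, use monotonicity in $\alpha$ to reduce to $\alpha=\theta H(1-H)$, and bound the resulting one-variable ratio --- is exactly the paper's, and your monotonicity claim is correct. The genuine gap is that the step you flag as ``fatal'' rests on a miscalculation, and the repair you propose for it does not work. Writing $s=H(1-H)$, your own formula gives $T(W^\pm)=s\bigl(1\mp\theta(1-2H)-\theta^2 s\bigr)$, so the factor $\sqrt{s}$ cancels in the ratio and
\[
g\bigl(H,\theta s\bigr)=\tfrac12\Bigl(\sqrt{1+\theta(1-2H)-\theta^2 s}+\sqrt{1-\theta(1-2H)-\theta^2 s}\Bigr)\le\tfrac12\Bigl(\sqrt{1+u}+\sqrt{1-u}\Bigr)\le\sqrt{1-u^2/4}, \quad u=\theta(1-2H).
\]
As $H\to 0$ this is at most $\sqrt{1-\theta^2/4}$, which is strictly below $1$ (about $0.93$ for $\theta=3/4$); the exact limit of $g$ is $\frac12(\sqrt{1+\theta}+\sqrt{1-\theta})\approx 0.91$, not $1$. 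The guaranteed gap $\theta s$ is small in absolute terms near the corners, but it is a \emph{constant fraction} of $s$, which is all the ratio sees. So the corners are the easy case, and your detour through Proposition~\ref{p:zplus} is both unnecessary and unworkable: from $Z(W^+)=Z(W)^2$ and \eqref{eq:ZversusH} you only get $H(W^+)\le Z(W)^2\le H(W)$, with no contraction factor, because $Z(W)$ may be as large as $\sqrt{H(W)}$.

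The regime that actually requires a separate argument is the one you dismiss in a sentence: $H$ near $1/2$, where $(1-2H)\to 0$ and the bound above degenerates to $g\le 1$. There the contraction must come from the $-\alpha^2$ term alone, and the saving observation is that $\alpha\ge\theta H(1-H)$ is then an \emph{absolute} constant: if, say, $1-2H<1/100$ then $s\ge 99/400$, so $\alpha\ge 99\theta/400$, whence $(1-2H)\alpha-\alpha^2\le-\alpha^2/2$ and $\E_{i\bmod 2}\sqrt{T(W_{n+1}^{(i)})}\le\sqrt{s-\alpha^2/2}\le\Lambda\sqrt{s}$ since $s\le 1/4$. This is precisely the paper's case split ($|1-2H|$ bounded away from $0$ versus not), the mirror image of yours. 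With the corner miscalculation corrected, your computation at $H=1/2$ extended as above completes the proof; as written, the proposal leaves the tails resting on an argument that fails.
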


  \begin{proof}
    Defining $h = H(\wni)$, we have
    \begin{dmath}
      \E_{i \bmod 2} \sqrt{T(W_{n+1}^{(i)})} = \frac 12 \left( \sqrt{h(1-h) + (1-2h)\alpha - \alpha^2} + \sqrt{h(1-h) - (1-2h)\alpha - \alpha^2}\right)
      \label{eq:mbound-expr}
    \end{dmath}
    where $\alpha = H( (\wni)^-) - H(\wni) = H(\wni) - H( (\wni)^+)$. By symmetry, we can assume $h \le 1/2$ without loss of generality, and we also know that $\alpha \ge \theta h(1-h)$ from Lemma \ref{l:thetabound}. We can write
    \begin{align*}
      2\left(\E_{i \bmod 2} \sqrt{T(W_{n+1}^{(i)})}\right) & \le \sqrt{h(1-h) + (1-2h)\alpha} + \sqrt{h(1-h) - (1-2h)\alpha}\\
      &\le \sqrt{h(1-h)} - \frac{((1-2h)\alpha)^2}{4\left( (h(1-h) \right)^{3/2}} - O((1-2h)\alpha)^4 \\
      &\le \sqrt{h(1-h)} - \frac{( (1-2h) \theta h(1-h))^2}{4(h(1-h)^{3/2}}\\
      &= \sqrt{h(1-h)} - \frac{\theta^2}{4} (1-2h) \sqrt{h(1-h)}
    \end{align*}
  where the second line is a Taylor expansion around $h(1-h)$. This analysis gives the desired result for whenever $1 - 2h$ is greater than an absolute constant. For clarity of analysis, let us fix a concrete constant $1 - 2h \ge 1/100$.

  We can therefore focus on the case where $1 - 2 h < 1/100$, which implies $h \in [99/200, 1/2]$. Continuing, we have
  \begin{dmath}
    \alpha \ge \theta h(1-h) \ge 99\theta /400,
    \label{eq:alpha-away-from-zero}
  \end{dmath}an absolute constant bounded away from zero. We can also write $\alpha \ge 2(1-2h)$ as $2(1-2h) \le 1/50$, which is less than $99\theta/400$, since $\theta > 3/4$ from Lemma \ref{l:thetabound}. This expression implies
  \[
    (1-2h)\alpha - \alpha^2 \le  - \alpha^2/2
  \]
  which, when inserted into \eqref{eq:mbound-expr}, offers
  \[
    \E_{i \bmod 2} \sqrt{T(W_{n+1}^{(i)})} \le \sqrt{h(1-h)-\frac{\alpha^2}{2}}.
  \]
  This implies the existence of a $\Lambda < 1$, since $\alpha$ is bounded away from zero in \eqref{eq:alpha-away-from-zero}, $h(1-h) \le \frac{1}{4}$, and the function $\frac{\sqrt{x - c}}{\sqrt{x}}$ is increasing for positive $c$ and $x > c$.
  \end{proof}

     \begin{corollary}
     \label{cor:markov-general}
       Taking $\Lambda$ as defined in Lemma \ref{l:mbound}, $\Pr_i [T(W_n^{(i)}) \ge \alpha^n] \le \frac12 \left( \frac{\Lambda^2}{\alpha} \right)^{n/2}$
     \end{corollary}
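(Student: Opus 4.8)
\textbf{Proof plan for Corollary \ref{cor:markov-general}.} The plan is to iterate Lemma \ref{l:mbound} to get an unconditional bound on $\E_i \sqrt{T(W_n^{(i)})}$, and then apply Markov's inequality. First I would observe that the uniform distribution on indices $i \in \{0,\dots,2^{n+1}-1\}$ can be sampled by first picking $\lfloor i/2 \rfloor$ uniformly in $\{0,\dots,2^{n}-1\}$ and then picking $i \bmod 2$ uniformly; by the recursive evolution \eqref{eq:evolution}, each child channel $W_{n+1}^{(i)}$ is obtained from the parent $W_n^{(\lfloor i/2\rfloor)}$ via the $^-$ or $^+$ transform according to the parity of $i$. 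Hence, by the law of total expectation,
\[
  \E_i \sqrt{T(W_{n+1}^{(i)})} = \E_{\lfloor i/2 \rfloor}\Bigl[\, \E_{i \bmod 2} \sqrt{T(W_{n+1}^{(i)})}\,\Bigr] \le \E_{\lfloor i/2 \rfloor}\bigl[\Lambda \sqrt{T(\wnip)}\bigr] = \Lambda\, \E_j \sqrt{T(W_n^{(j)})} \ ,
\]
where the inequality is exactly Lemma \ref{l:mbound} applied for each fixed value of $\lfloor i/2 \rfloor$.

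Next I would unroll this recursion down to the base channel: by induction on $n$, $\E_i \sqrt{T(W_n^{(i)})} \le \Lambda^n \sqrt{T(W_0^{(0)})} = \Lambda^n \sqrt{T(W)}$. Since $T(W) = H(W)(1-H(W)) \le 1/4$ for any channel, this gives $\E_i \sqrt{T(W_n^{(i)})} \le \tfrac12 \Lambda^n$.

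Finally, I would apply Markov's inequality to the nonnegative random variable $\sqrt{T(W_n^{(i)})}$ (over uniform $i$), noting that $T(W_n^{(i)}) \ge \alpha^n$ is equivalent to $\sqrt{T(W_n^{(i)})} \ge \alpha^{n/2}$:
\[
  \Pr_i\bigl[T(W_n^{(i)}) \ge \alpha^n\bigr] = \Pr_i\bigl[\sqrt{T(W_n^{(i)})} \ge \alpha^{n/2}\bigr] \le \frac{\E_i \sqrt{T(W_n^{(i)})}}{\alpha^{n/2}} \le \frac{\tfrac12 \Lambda^n}{\alpha^{n/2}} = \frac12 \left(\frac{\Lambda^2}{\alpha}\right)^{\!n/2} \ ,
\]
which is the claimed bound.

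\textbf{Anticipated difficulty.} There is essentially no hard step here once Lemma \ref{l:mbound} is in hand; the only thing to be careful about is the passage from the conditional statement of Lemma \ref{l:mbound} (expectation over $i \bmod 2$ with $\lfloor i/2\rfloor$ fixed) to the unconditional contraction, which is just the tower property together with the fact that the uniform measure at level $n+1$ pushes forward to the uniform measure at level $n$ under $i \mapsto \lfloor i/2\rfloor$. Everything else is monotonicity of $\sqrt{\cdot}$, the bound $T(W)\le 1/4$, and a one-line Markov argument.
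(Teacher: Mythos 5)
Your proof is correct and follows exactly the same route as the paper: iterate Lemma \ref{l:mbound} to get $\E_i \sqrt{T(W_n^{(i)})} \le \Lambda^n \sqrt{T(W)} \le \tfrac12 \Lambda^n$, then apply Markov's inequality to $\sqrt{T(W_n^{(i)})}$. The paper states this in two lines; your write-up just makes the tower-property step and the Markov calculation explicit.
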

      \begin{proof}
Clearly we have
    \[
      \E_i \sqrt{T(W_{n+1}^{(i)})} \le \Lambda^n \sqrt{T(W)} \le \Lambda^n \cdot \frac12
    \]
    and we can therefore use Markov's inequality to obtain the desired consequence.
      \end{proof}

      We are now in a position where we can conclude Proposition \ref{p:bsc-rough}.

\begin{proof}[Proof of Proposition \ref{p:bsc-rough}]
We have 
  \begin{align}
    \nonumber &\Pr(\overline{A_\rho}) \max_{i \in \overline{A_\rho}} (I(\wni)) + \Pr(A_\rho^b) \max_{i \in A_\rho^b} I(\wni) + \\
    &\Pr(A_\rho^g)\max_{i \in A_\rho^g} I(\wni) \ge \E_i(I(\wni)) = I(W)
    \label{eq:bsc-averaging}
  \end{align}
  where the last equality follows by the conservation of mutual information in our transformation as stated in equation \eqref{eq:I-conserved}.

  By definition, $I(W) = 1 - H(W)$. As $\min_{i \in A^b_\rho} H(\wni) \ge 1-2\rho^n$ by Fact \ref{fact:wni-approx}, we have $\max_{i \in A^b_\rho} I(\wni) \le 2 \rho^{n}$.
Using this together with equation \eqref{eq:bsc-averaging}, obtain
  \begin{equation*}
    \Pr(\overline{A_\rho}) + \Pr(A_\rho^b)\cdot  2 \rho^{n} + \Pr(A_\rho^g) \ge I(W)
  \end{equation*}
  where we used the trivial inequality (for binary-input channels) $I(\wni) \le 1$ for every $i$. Rearranging terms, using the bounds $\Pr(\overline{A_\rho}) \le \frac12 (\Lambda^2 / \rho)^{n/2}$ from Corollary \ref{cor:markov-general} and $H(\wni) \le 2\rho^n$ for $i \in A_\rho^g$ from Fact \ref{fact:wni-approx}, we get
  \begin{align}
    \nonumber \Pr_i [ H(\wmi) \le 2 \rho^m ] &\ge  \Pr(A_\rho^g) \\
                                   &\ge I(W) - \frac12 (\Lambda^2 / \rho)^{m/2} - 2 \rho^{m} \ .
    \label{eq:bsc-averaging2}
  \end{align}
  Clearly, if $\rho > \Lambda^2$, there is a constant $b_\rho$ such that $m \ge b_\rho \log (1/\epsilon)$ implies that the above lower bound is at least $I(W)-\eps$. We conclude our analysis by noting that $Z(W) \le \sqrt{H(W)}$, as observed in \eqref{eq:ZversusH}, so that $\sqrt{\rho} > \Lambda$ can play the role of $\rho$ for a lower bound similar to \eqref{eq:bsc-averaging2} on $\Pr_i [ Z(\wmi) \le 2 \kappa^m ]$ for $\kappa \in (\Lambda,1)$. 
\end{proof}
\subsection{Fine polarization}
The following proposition formalizes what we mean by ``fine polarization.''
\begin{proposition}
  \label{p:fine-polarization}
  Given $\delta \in (0, 1/2)$, there exists a constant $c_\delta$ for all binary input memoryless channels $W$ and $\epsilon > 0$ such that if $n_0 > c_\delta \log (1/\epsilon)$ then
  \[
    \Pr_{i} \left[Z( W_{n_0}^{(i)}) \le 2^{-2^{\delta n_0}}\right] \ge I(W) - \epsilon.
  \]
\end{proposition}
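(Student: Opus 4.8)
The plan is a two-phase bootstrap, in the spirit of Ar\i kan--Telatar's rate-of-polarization argument. In Phase~1 I would use rough polarization (Proposition~\ref{p:bsc-rough}) to reach, at an intermediate level $m$ equal to a fixed fraction of $n_0$, a large family of channels whose Bhattacharyya parameter is already \emph{polynomially} small in $\epsilon$; in Phase~2 I would run the polar transform for the remaining $k=n_0-m$ steps and argue that for almost all of these channels $Z$ collapses to $2^{-2^{\delta n_0}}$. Throughout Phase~2 I would track $Z$ via the potential $L:=\log_2(1/Z)$, using only two one-step facts: a ``$+$'' step squares $Z$ and hence \emph{doubles} $L$ (Proposition~\ref{p:zplus}), while a ``$-$'' step obeys $Z(W^-)\le 2Z(W)$ and hence \emph{decreases $L$ by at most $1$} (Proposition~\ref{p:zminus}).

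Concretely, fix $\rho:=\tfrac12(1+\Lambda)\in(\Lambda,1)$, $c:=\log_2(1/\rho)>0$, $\gamma:=\tfrac12-\delta\in(0,\tfrac12)$, and set $m:=\lceil\gamma n_0\rceil$, $k:=n_0-m$, $r:=\lceil\log_2(4n_0/\epsilon)\rceil=\Theta(\log(1/\epsilon))$. For $n_0\ge c_\delta\log(1/\epsilon)$ with $c_\delta$ a sufficiently large constant depending only on $\delta$, one has $m\ge b_\rho\log(2/\epsilon)$, so Proposition~\ref{p:bsc-rough} (applied with $\epsilon/2$ in place of $\epsilon$) yields a set $\W_r$ of level-$m$ channels of $\Pr_i$-measure at least $I(W)-\epsilon/2$, each with $Z\le 2\rho^m$; equivalently each such channel has $L_m:=\log_2(1/Z)\ge cm-1$, which is $\ge 4r$ once $c_\delta$ is large (as $cm=\Theta(n_0)\gg r$). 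This is the head start.

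Conditioning on the event that the level-$m$ channel $\wmi$ lies in $\W_r$, the remaining $k$ transform steps of a uniformly random level-$n_0$ index form an i.i.d.\ uniform string over $\{+,-\}$, independent of that event. Call such a length-$k$ step-string \emph{good} if it has no run of $r$ consecutive ``$-$'' steps \emph{and} at least $\delta n_0$ ``$+$'' steps. A union bound over the $\le k$ starting positions of such a run bounds the failure of the first requirement by $k\,2^{-r}\le\epsilon/4$; and since the number of ``$+$'' steps is distributed as $\mathrm{Bin}(k,\tfrac12)$ with mean $\tfrac k2\ge(\tfrac14+\tfrac\delta2)n_0-\tfrac12$, exceeding $\delta n_0$ by $\Omega_\delta(n_0)$, a Hoeffding bound bounds the failure of the second requirement by $2^{-\Omega_\delta(n_0)}\le\epsilon/4$ (again for $c_\delta$ large). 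Thus a good step-string occurs with probability at least $1-\epsilon/2$. For a good step-string, write it left to right as $(-)^{a_1}(+)(-)^{a_2}(+)\cdots(+)(-)^{a_{t+1}}$, where $t\ge\delta n_0$ is its number of ``$+$'' steps and every $a_j<r$; let $v_0:=L_m$ and, for $q\ge1$, let $v_q$ be the value of $L$ just after the $q$-th ``$+$'' step. Applying Proposition~\ref{p:zminus} along the $a_q$ ``$-$'' steps preceding that ``$+$'' step and then Proposition~\ref{p:zplus} gives $v_q\ge 2(v_{q-1}-a_q)\ge 2v_{q-1}-2r$, i.e.\ $v_q-2r\ge 2(v_{q-1}-2r)$; since $v_0\ge 4r$ this recursion yields $v_q\ge 2^{q+1}r$, and after the trailing ``$-$'' steps $L_{n_0}\ge v_t-r\ge 2^t r\ge 2^{\delta n_0}$, that is $Z(W^{(i)}_{n_0})\le 2^{-2^{\delta n_0}}$. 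Multiplying the two independent probabilities, $\Pr_i[\,Z(W^{(i)}_{n_0})\le 2^{-2^{\delta n_0}}\,]\ge(I(W)-\tfrac\epsilon2)(1-\tfrac\epsilon2)\ge I(W)-\epsilon$.

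The one genuinely delicate point is the treatment of the ``$-$'' steps in Phase~2. The naive deterministic bound, which reorders all the ``$-$'' steps of Phase~2 to its front, is hopelessly lossy: under that ordering the $\Theta(n_0)$ ``$-$'' steps wipe out the $\Theta(n_0)$-sized head start $L_m$ before any squaring takes effect, and one is only able to push $\delta$ to some constant well below $1/2$. The resolution is the observation above: a run of $r=\Theta(\log(1/\epsilon))$ consecutive ``$-$'' steps is exponentially unlikely, so with overwhelming probability at most $r$ ``$-$'' steps separate two consecutive ``$+$'' steps; such a block costs only $\le r$ in $L$ while the following ``$+$'' step doubles $L$, and there are $\Omega(n_0)$ ``$+$'' steps, so $L$ ends up at least $2^{\Omega(n_0)}$. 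The remaining pieces --- the Hoeffding bound on the number of ``$+$'' steps, the geometric recursion for $v_q$, and the bookkeeping of constants so that $c_\delta$ can absorb $b_\rho$, $c$, and $\delta$ --- are routine.
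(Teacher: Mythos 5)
Your proof is correct, and while it follows the same two--phase bootstrap as the paper (rough polarization to an intermediate level $m$, then amplification via the squaring/doubling bounds of Propositions \ref{p:zplus} and \ref{p:zminus}), the second phase is handled by a genuinely different combinatorial argument. The paper partitions the remaining $n=n_0-m$ steps into $c_\rho=\Theta(n/m)$ equal blocks, applies a Chernoff bound to each block to guarantee a $\beta$-fraction of ``$+$'' steps per block (Lemma \ref{l:c_chi} packaging the union bound over blocks), and within each block assumes the worst-case ordering of all doublings before all squarings; the block length must then be tuned against the $\Theta(m)$-sized head start $\lg Z(M)$ so that the doublings in one block cannot exhaust the gain from the preceding block. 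You instead condition on two global events --- no run of $r=\Theta(\log(n_0/\epsilon))$ consecutive ``$-$'' steps, and at least $\delta n_0$ ``$+$'' steps in total --- and drive the affine recursion $v_q-2r\ge 2(v_{q-1}-2r)$ on $L=\log_2(1/Z)$ sampled at the ``$+$'' steps, which is arguably cleaner: it needs only a $4r=O(\log(n_0/\epsilon))$ head start (rough polarization supplies far more), and it replaces the per-block Chernoff bounds and block-length tuning by one union bound over run positions plus one Hoeffding bound. Both arguments use the same (correct) independence of the first $m$ and last $k$ index bits and reach the full range $\delta<1/2$ with $n_0=O_\delta(\log(1/\epsilon))$. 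The only small caveat --- that the constant bookkeeping (e.g.\ your requirement $n_0\ge C_\delta\log(4/\epsilon)$ for the Hoeffding step, versus the hypothesis $n_0>c_\delta\log(1/\epsilon)$) implicitly assumes $\epsilon$ bounded away from $1$ --- is shared with the paper's own proof and is harmless since the statement is essentially vacuous in that regime.
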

We will first need the following lemma to specify one of the constants.
\begin{lemma}
  \label{l:c_chi}
For all $\gamma > 0$, $\beta \in (0,1/2)$ and $\rho\in (0,1)$, there exists a constant $\theta(\beta, \gamma, \rho)$ such that for all $\eps \in (0,1)$, if $m > \theta(\beta, \gamma, \rho) \cdot \log(2/\epsilon)$, then
  \[
    \left( \frac{\lg(2/\rho)\gamma}{2} + 1 \right) \exp\left( -
      \frac{(1-2\beta)^2 \lg(2/\rho) m}{2} 
    \right) < \epsilon/2 \ .
  \]
\end{lemma}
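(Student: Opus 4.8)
The plan is to observe that the left-hand side has the form $A\,e^{-Bm}$, where
\[
A = \frac{\lg(2/\rho)\,\gamma}{2}+1 \qquad\text{and}\qquad B = \frac{(1-2\beta)^2\,\lg(2/\rho)}{2}
\]
are both positive constants depending only on $\beta,\gamma,\rho$. Since $\rho\in(0,1)$ we have $\lg(2/\rho) > \lg 2 = 1 > 0$, so (using $\gamma>0$) $A>1$, and (using $\beta<1/2$) $B>0$. Taking logarithms, the target inequality $A\,e^{-Bm} < \eps/2$ is equivalent to
\[
Bm \;>\; \ln(2A) + \ln(1/\eps)\,.
\]

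Next I would bound the right-hand side of this equivalent inequality by a constant multiple of $\log(2/\eps)$. Since $\eps\in(0,1)$, we have $\ln(1/\eps) \le \ln(2/\eps)$; and $\ln(2A)$ is a fixed positive constant while $\log(2/\eps) \ge \log 2 > 0$, so $\ln(2A) \le \frac{\ln(2A)}{\log 2}\,\log(2/\eps)$. (The ratio between $\ln$ and $\log$ is itself an absolute constant, so up to adjusting the final constant we may treat the two interchangeably.) Combining, $\ln(2A)+\ln(1/\eps) \le C\,\log(2/\eps)$ for $C = 1 + \ln(2A)/\log 2$, which depends only on $\gamma$ and $\rho$.

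Then I would simply take $\theta(\beta,\gamma,\rho) = C/B$. For any $m > \theta(\beta,\gamma,\rho)\cdot\log(2/\eps)$ we get $Bm > C\log(2/\eps) \ge \ln(2A)+\ln(1/\eps)$, which is exactly the equivalent form of the claimed bound, so $A\,e^{-Bm} < \eps/2$ as desired.

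I do not expect a genuine obstacle here: the lemma is a bookkeeping statement that isolates, for later use in the proof of Proposition~\ref{p:fine-polarization}, the routine fact that an expression of the form $\mathrm{const}\cdot e^{-\Omega(m)}$ falls below $\eps/2$ once $m$ exceeds a suitable constant times $\log(1/\eps)$. The only points needing a moment's care are checking the signs — that $A>0$ and, crucially, that $B>0$, which is precisely where the hypotheses $\beta<1/2$ and $\rho<1$ enter — and being consistent about the base of the logarithm, which affects only the numerical value of $\theta$ and not the structure of the argument.
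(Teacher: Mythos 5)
Your proof is correct and follows essentially the same route as the paper, which simply rewrites the left-hand side as $c_1\exp(-c_2 m)$ for constants $c_1,c_2$ independent of $\eps$ and declares the result clear; you have just filled in the routine log-taking and the choice $\theta = C/B$ that the paper leaves implicit. The sign checks ($A>0$, $B>0$ via $\beta<1/2$ and $\rho<1$) and the remark about the base of the logarithm are exactly the right points to verify, and nothing is missing.
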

\begin{proof}
  We can rewrite this expression as $c_1\exp(-c_2 m) < \epsilon$ for constants  $c_1, c_2$ that are independent of $\epsilon$ and the result is clear.
\end{proof}
\begin{proof}[Proof of Proposition \ref{p:fine-polarization}]
    Fix a $\beta \in (\delta, 1/2)$, and let $\gamma = \frac{\delta}{\beta - \delta}$. Let $\rho$ be an appropriate constant for Proposition \ref{p:bsc-rough}, and $b_\rho$ be the associated constant. We will define 
    \[ c_\delta = (1+\gamma) \max \{ 2 b_\rho, \theta(\beta, \gamma, \rho), 2/\rho, c_\beta \}, \]
      where $c_\beta$ is defined as a bound on $m$ such that if $m \ge c_\beta$, then 
      \[
        \frac{1-\beta}{1 - 2^{-\frac{n_0 - m}{c_\rho} \beta}} \le 1;
      \]
      $c_\beta = \frac{-\lg(\beta)}{2 \lg (2/\rho) \beta}$ suffices.
    
    Fix an $n_0 > c_\delta \log(1/\epsilon)$, $m = \frac{1}{1+\gamma} n_0$ and $n = n_0 - m = \gamma m$. We first start with a set of roughly polarized channels; by our choice of $c_\delta$, $m > b_\rho \log (2/\eps)$ and we can apply Proposition \ref{p:bsc-rough} and obtain a set $\W_r$ where
    \begin{equation}
      \label{eq:P(rough)}
      \Pr_i [Z_m^{(i)} \in \W_r] \ge I(W) - \epsilon/2   
    \end{equation}
    and $Z(M) \le 2 \rho^m$ for all $M \in \W_r$. Let $R(m)$ be the set of all associated indices $i$ in $\W_r$.
    
    Fix a $M \in \W_r$ and define a sequence $\{\tilde Z_n^{(i)}\}$ where
    \begin{equation}
    \tilde Z_{n+1}^{(i)} =
    \begin{cases}
      (\tilde Z_n^{(\lfloor i/2 \rfloor)})^2 & i \bmod 2 \equiv 1\\
      2\tilde Z_n^{(\lfloor i/2 \rfloor)} & i \bmod 2 \equiv 0\\
    \end{cases},
    \label{eq:fine-evolution}
  \end{equation}
  with the base case $Z_0^{(0)} = Z(M)$. Clearly $Z(M_n^{(i)}) \le \tilde Z_{n}^{(i)}$ by Proposition \ref{p:zminus}.  (Recall that $X_n^{(i)}$ is the polarization process done for $n$ steps with $i$ determining which branch to take for arbitrary binary input channel $X$.)

  Let $c_\rho = \lceil \frac{n}{2m \lg (2/\rho)} \rceil$. 
  Fix a $\beta \in (0, 1/2)$. 
  Define a collection of events $\{G_j(n) : 1 \le j \le c_\rho\}$:
  \begin{equation}
    \label{eq:defG}
    G_j(n) = \biggl\{ i : \sum_{\substack{k \in [jn/c_\rho,\\ (j+1)n/c_\rho)}} i_k \ge \beta n / c_\rho \biggr\};
  \end{equation}
  here, $i_k$ indicates the $k$'th least significant bit in the binary representation of $i$. Qualitatively speaking, $G_j$ occurs when the number of $1$'s in the $j$'th block is not too small. 
  
Since each bit of $i$ is independently distributed, we can apply the Chernoff-Hoeffding bound \cite{hoeffding} (with $p = 1/2$ and $\epsilon = 1/2 - \beta$) to conclude
    \begin{align}
      \nonumber \Pr_i(i \in G_j(n)) &\le 1 - \exp(-2 (1/2 - \beta)^2 n /(c_\rho))\\
      &= 1 -\exp((1-2\beta)^2 n / (2 c_\rho))
      \label{eq:chernoff}
    \end{align}
    for all $j \in [c_\rho]$. 

    Define 
    \begin{equation}
      \label{eq:G}
      G(n) = \bigcap_j G_j(n).
    \end{equation}
    Applying the union bound to $G(n)$ with \eqref{eq:chernoff}, obtain
    \begin{equation}
      \label{eq:unionbound}
      \Pr_i(i \in G(n)) \ge 1 - c_\rho \exp(-(1-2\beta)^2 n / (2 c_\rho)).
    \end{equation}

    Now we develop an upper bound on the evolution of $\tilde Z$ for each interval of $n/c_\rho$ squaring/doubling operations, conditioned on $i$ belonging to the 
    high probability set $G(n)$.
    
    Fix an interval $j \in \{0, 1, \dots, c_\rho\}$. By the evolution equations \eqref{eq:fine-evolution} and the bound provided by \eqref{eq:defG}, it is easy to see that the greatest possible value for $\tilde Z_{(j+1)n/c_\rho}$ is attained by $(1-\beta)n/c_\rho$ doublings followed by $\beta n/c_\rho$ squarings. Therefore,
    \begin{dmath*}
      \lg \tilde Z_{(j+1)n/c_\rho}^{\lfloor i / 2^{jn/c_\rho} \rfloor} \le 2^{\beta n/c_\rho} \left( (1 - \beta) n / c_\rho + \lg \tilde Z_{jn/c_\rho}^{\lfloor i / 2^{(j-1)n/c_\rho} \rfloor} \right).
    \end{dmath*}

    Cascading this argument over all intervals $j$, obtain
    \begin{align}
      \nonumber &\lg Z(M_n^{(i)}) \\
      \nonumber &\le \lg \tilde Z_n^{(i)}\\
                &\le 2^{n\beta} \lg Z(M) + \frac{n}{c_\rho} (1-\beta) (2^{\beta n/c_\rho} + 2^{2 \beta n/c_\rho} + \cdots + 2^{n \beta})\\
      \nonumber &\le 2^{n\beta} \lg Z(M) + \frac{n}{c_\rho} (1-\beta) \frac{2^{n \beta}}{1 - 2^{-\frac{n}{c_\rho}\beta}}\\
      \nonumber &= 2^{n \beta} \left( \lg Z(M) + \frac{n}{c_\rho} \frac{1 - \beta}{1 - 2^{-\frac{n}{c_\rho} \beta}} \right) \\
      \nonumber    &\le 2^{n \beta} (\lg Z(M) + n/c_\rho) \quad \text{as $m \ge c_\beta$}
                \intertext{As $M \in \W_r$, $Z(M) \le 2\rho^m$, and $n/c_\rho \le 2 m \lg(2/\rho)$, we can bound the above with}
      \label{eq:lg(Z(M))} &\le -  2^{n \beta} \lg(2/\rho) m \\
      \nonumber & \le - 2^{n \beta} \quad \text{as $m \ge 2/\rho$}
    \end{align}
 This shows that
    \[
      Z(W_{n_0}^{(i)}) \le 2^{-2^{\beta n}} = 2^{-2^{\delta n_0}} ,
    \]
    where the equality is due to the definition of $n$ and $m$, as long as the first $m$ bits of $i$ are in $R(m)$ and the last $n$ bits of $i$ are in $G(n)$. The former has probability at least $I(W)-\epsilon/2$ by \eqref{eq:P(rough)} and the latter has probability at least
    \begin{align*}
      &1 - c_\rho \exp(-(1-2\beta)^2 n / (2c_\rho))\\
      &\ge 1 - 
    \left( \frac{\lg(2/\rho)\gamma}{2} + 1 \right) \exp\left( -
      \frac{(1-2\beta)^2 \lg(2/\rho) m}{2} 
    \right) \\
    &\ge 1 - \epsilon/2
    \end{align*}
    by our choice of $c_\delta$ and Lemma \ref{l:c_chi}. 

    Putting the two together with the union bound, obtain
    \[
 \Pr_i \left[ Z(W_{n_0}^{(i)}) \le 2^{-2^{\delta n_0}} \right] \ge I(W) - \epsilon. \qedhere
    \]
  \end{proof}
The following corollary will be useful in the next section, where we will deal with an approximation to the Bhattacharyya parameter. It relaxes the conditions on the polarized set from Proposition \ref{p:bsc-rough}.
  \begin{corollary}
    \label{c:looser}
    Proposition \ref{p:fine-polarization} still holds with a modified roughly polarized set (recall the definition of the roughly polarized set $\W_r$ from equation \eqref{eq:W_r}) $\widetilde{\W_r}$ where
    $\widetilde{\W_r} \supset \W_r$ 
    and 
    $Z\left(\widetilde{\W_r}\right) \le \sqrt{3 \rho^m}$ (instead of $2 \rho^m$) with a modified constant $\widetilde{c_\delta}$.
  \end{corollary}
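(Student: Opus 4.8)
The plan is to observe that the relaxed bound $Z(M)\le\sqrt{3\rho^m}$ on the members of $\widetilde{\W_r}$ is, up to a harmless constant, still of the shape guaranteed by Proposition~\ref{p:bsc-rough} but with the polarization parameter $\rho$ replaced by $\rho'':=\sqrt\rho$, and then to re-run the proof of Proposition~\ref{p:fine-polarization} with $\rho''$ in place of $\rho$ throughout.

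First I would rewrite $\sqrt{3\rho^m}=\sqrt 3\,(\sqrt\rho)^m\le 2\,(\rho'')^m$, using $\sqrt 3<2$, so that the bound on $\widetilde{\W_r}$ is literally of the form ``$Z(M)\le 2(\rho'')^m$'' that appears in Proposition~\ref{p:bsc-rough}. Next I would check that $\rho''$ is an admissible parameter, i.e., $\rho''\in(\Lambda,1)$: since $0<\Lambda<1$ and $x\mapsto\sqrt x$ is increasing with $\sqrt x>x$ on $(0,1)$, the hypothesis $\rho\in(\Lambda,1)$ gives $\sqrt\rho>\sqrt\Lambda>\Lambda$ and $\sqrt\rho<1$. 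Finally, the assumption $\widetilde{\W_r}\supset\W_r$ immediately carries over the probability estimate, $\Pr_i[W_m^{(i)}\in\widetilde{\W_r}]\ge\Pr_i[W_m^{(i)}\in\W_r]\ge I(W)-\epsilon/2$. Hence $\widetilde{\W_r}$ satisfies exactly the conclusion of Proposition~\ref{p:bsc-rough}, with the same number $m$ of rough-polarization steps, for the parameter $\rho''$.

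With that in hand I would go through the proof of Proposition~\ref{p:fine-polarization} and note that, beyond the probability estimate just mentioned, the only numerical use of the $Z$-bound on the roughly polarized set is the inequality $Z(M)\le 2\rho^m$, and that this enters at exactly one step, namely the cascaded estimate leading to~\eqref{eq:lg(Z(M))} (where it is combined with $n/c_\rho\le 2m\lg(2/\rho)$). Everywhere else $\rho$ occurs it occurs purely as a parameter --- in $c_\rho=\lceil n/(2m\lg(2/\rho))\rceil$, in $c_\beta$, in the definition of $c_\delta$ through $b_\rho,\theta(\beta,\gamma,\rho),2/\rho,c_\beta$, and in the invocation of Lemma~\ref{l:c_chi} --- and the Chernoff and union-bound portions do not mention $\rho$ at all once $c_\rho$ is fixed. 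Substituting $\rho''$ for $\rho$ throughout therefore leaves every estimate intact: the relaxed bound $Z(M)\le 2(\rho'')^m$ is formally identical to the one used in the original argument, and $\rho''\in(\Lambda,1)$ keeps all the $\rho$-dependent constants finite. The modified constant is then $\widetilde{c_\delta}:=(1+\gamma)\max\{2b_{\rho''},\theta(\beta,\gamma,\rho''),2/\rho'',c_\beta(\rho'')\}$, i.e., the expression defining $c_\delta$ evaluated at $\rho''$; it is finite because each of $b_\rho$, $\theta(\beta,\gamma,\rho)$, $2/\rho$, $c_\beta$ stays finite for $\rho\in(\Lambda,1)$ and $\rho''$ lies in that interval.

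The ``main obstacle'' here is really just bookkeeping. Two points deserve the most care: (i) confirming that the relaxation $\rho\mapsto\sqrt\rho$ does not push the parameter out of $(\Lambda,1)$ --- this is precisely why a square root is the right amount of slack to allow, rather than something coarser --- and (ii) auditing the proof of Proposition~\ref{p:fine-polarization} to be sure that the specific constant $2$ in ``$2\rho^m$'' is never exploited and that no other estimate secretly depends on $\rho$ rather than on the derived quantities $c_\rho$, $\gamma$, $\beta$. Both checks are routine, so I expect no real difficulty beyond this substitution.
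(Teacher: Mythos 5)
Your proof is correct, and it rests on the same underlying observation as the paper's: the weaker bound $\sqrt{3\rho^m}$ only changes constants. The paper's own proof is an in-place edit of the final estimate --- it notes that in \eqref{eq:lg(Z(M))} the square root contributes a factor of $1/2$ outside the $\lg$, so $c_\rho$ must be adjusted by a constant factor, and that $\lg(2/\rho)$ becomes $\lg(3/\rho)$. You instead absorb the relaxation into the reparametrization $\rho\mapsto\rho''=\sqrt{\rho}$, check that $\rho''\in(\Lambda,1)$, and rerun Proposition \ref{p:fine-polarization} verbatim; this is tidier in that it reduces the corollary to a literal instance of the already-proven argument rather than a re-audit of one inequality, at the cost of having to confirm (as you do) that $\rho$ enters nowhere else in an essential way. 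One small bookkeeping correction: your $\widetilde{c_\delta}$ should also dominate $2b_\rho$ for the \emph{original} $\rho$, not only $2b_{\rho''}$, because the probability bound $\Pr_i[W_m^{(i)}\in\widetilde{\W_r}]\ge I(W)-\eps/2$ is inherited from $\W_r$, whose construction via Proposition \ref{p:bsc-rough} requires $m\ge b_\rho\log(2/\eps)$; since $b_{\rho''}$ need not exceed $b_\rho$, take the maximum over both constants.
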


  \begin{proof}
    The changes that need to be made follow from Equation \eqref{eq:lg(Z(M))}, where $\lg Z(M)$ is used. With the extra square root, an extra factor of $1/2$ appears outside of the $\lg$, which means $c_\rho$ needs to be adjusted by a constant factor. In addition, $\lg(2/\rho)$ needs to be adjusted to $\lg(3/\rho)$, but this is also just a constant change.
  \end{proof}

\section{Efficient construction of polar codes}
\label{s:construction}
The construction of a polar code reduces to determining the frozen set of indices (the generator matrix then consists of columns of $G_n = K^{\otimes n} B_n$ indexed by the non-frozen positions). The core component of the efficient construction of a frozen set is estimating the Bhattacharyya parameters of the subchannels $W_n^{(i)}$. In the erasure case, this is simple because the evolution equation offered by Proposition \ref{p:zminus} is exact. In the general case, the na\"ive calculation takes too much time: $W_n^{(i)}$ has an exponentially large output alphabet size in terms of $N = 2^n$. 

Our goal, therefore, is to limit the alphabet size of $W_n^{(i)}$ while roughly maintaining the same Bhattacharyya parameter. With this sort of approach, we can select channels with relatively good Bhattacharyya parameters. The idea of approximating the channel behavior by degrading it via output symbol merging is due to \cite{tal-vardy} and variants of it were analyzed in \cite{PHTT}. The approach is also discussed in the survey \cite[Section 3.3]{sasoglu-book}. Since we can only achieve an inverse polynomial error in estimating the Bhattacharyya parameters with a polynomial alphabet, we use the estimation only up to the 
rough polarization step, and then use the explicit description of the subsequent good channels that is implicit in the proof of Proposition \ref{p:fine-polarization}. 

We note that revised versions of the Tal-Vardy work \cite{tal-vardy} also include a polynomial time algorithm for code construction by combining their methods with the analysis of \cite{PHTT}. However, as finite-length bounds on the speed of polarization were not available to them, they could not claim $\mathrm{poly}(N/\eps)$ construction time, but only $c_\eps N$ time for some unspecified $c_\eps$.

We will first state our binning algorithm, along with its properties, and then conclude the main theorem.

\subsection{Binning Algorithm}
For our binning, we deal with the marginal distributions of the input bit given an output symbol. A binary-input symmetric channel $W$ defines a marginal probability distribution $W(y|x)$. We invert this conditioning to form the expression
\[
  p(0|y) = \Pr_x(x = 0 | W(x) = y) = \frac 12 \frac{W(y|0)}{\Pr_x(W(x)=y)}
\]
for a uniformly distributed input bit $x$. In addition, we introduce the one-argument form
\[
  p(y) = \Pr_x(W(x) = y)
\]
for the simple probability that the output is $y$ given an uniformly distributed input bit $x$.

\begin{algorithm}[h]
  \caption{Binning algorithm\label{alg:binning}}
  \SetKwInOut{Input}{input}
  \SetKwInOut{Output}{output}
  \LinesNumbered
  \DontPrintSemicolon

  \Input{$W : \B \to \Y$, $k > 0$}
  \Output{$\widetilde W : \B \to \widetilde \Y$}
  Initialize new channel $\widetilde W$ with symbols $\tilde y_0, \tilde y_1 \dots \tilde y_{k}$ with $\widetilde W(\tilde y | x) = 0$ for all $\tilde y$ and $x \in \B$ \;
  \For{$y \in \Y$}{
    $p(0 | y) \gets \frac 12\frac{W(y|0)}{\Pr_x(W(x) = y)}$\;
    $\widetilde W(\tilde y_{\lfloor k p(0|y) \rfloor} | 0) \gets$
    $\widetilde W(\tilde y_{\lfloor k p(0|y) \rfloor} | 0) + W(y | 0)$\;
    $\widetilde W(\tilde y_{ \lfloor k p(0|y) \rfloor} | 1) \gets$
    $\widetilde W(\tilde y_{ \lfloor k p(0|y) \rfloor} | 1) + W(y | 1)$\;
  }
  \Return $\widetilde W$
\end{algorithm}
\begin{proposition}
  \label{p:approx-z}
  For a binary-input symmetric channel $W : \B \to \Y$ and all $k > 0$, there exists a channel $\widetilde W: \B \to \widetilde \Y$ such that 
  \[ H(W) \le H(\widetilde W) \le H(W) + 2\lg(k)/k, \quad |\widetilde \Y| \le k+1 \ , \]
and the channel transition probabilities, $\widetilde W(y|x)$, are computable, by Algorithm \ref{alg:binning}, in time polynomial in $|\Y|$ and $k$.
\end{proposition}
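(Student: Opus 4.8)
I would dispatch the easy assertions first and isolate the one genuinely quantitative step. The alphabet bound is immediate: the bin index $\lfloor k\, p(0|y)\rfloor$ ranges over $\{0,1,\dots,k\}$ because $p(0|y)\in[0,1]$, so $\widetilde W$ has at most $k+1$ output symbols. The running time is equally direct: Algorithm~\ref{alg:binning} makes a single pass over $\Y$, performing $\O(1)$ arithmetic operations per symbol after an $\O(k)$ initialization, hence $\poly(|\Y|,k)$ total. For the lower bound $H(W)\le H(\widetilde W)$, observe that the output $\widetilde Y$ of $\widetilde W$ is the deterministic function $y\mapsto \tilde y_{\lfloor k\, p(0|y)\rfloor}$ of the output $Y$ of $W$; thus $X\to Y\to\widetilde Y$ is a Markov chain and the data-processing inequality gives $H(X\mid\widetilde Y)\ge H(X\mid Y)$, i.e.\ $H(\widetilde W)\ge H(W)$.

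The substance is the upper bound $H(\widetilde W)-H(W)\le 2\lg(k)/k$. Since $\widetilde Y$ is a function of $Y$, one has $H(X\mid Y,\widetilde Y)=H(X\mid Y)$, so the chain rule for mutual information gives $H(\widetilde W)-H(W)=H(X\mid\widetilde Y)-H(X\mid Y)=I(X;Y\mid\widetilde Y)$. Writing $I_b=[b/k,(b+1)/k)$ for the posterior interval defining bin $b$ and $\bar p_b=\Pr(X=0\mid\widetilde Y=\tilde y_b)$, and using $H(X\mid Y)=\sum_y p(y)\,h(p(0|y))$, I can expand
\[
  I(X;Y\mid\widetilde Y)=\sum_b \Pr(\widetilde Y=\tilde y_b)\Bigl(h(\bar p_b)-\E\bigl[h(p(0|Y))\,\big|\,\widetilde Y=\tilde y_b\bigr]\Bigr).
\]
Every $y$ placed in bin $b$ satisfies $p(0|y)\in I_b$, and $\bar p_b$ is a $p(y)$-weighted average of these values, hence also lies in $I_b$; so both $h(\bar p_b)$ and the conditional average of $h(p(0|Y))$ lie between $\min_{I_b}h$ and $\max_{I_b}h$. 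Consequently each summand is at most the oscillation of $h$ over an interval of length $1/k$, and $H(\widetilde W)-H(W)\le\max_b\bigl(\max_{I_b}h-\min_{I_b}h\bigr)$.

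It remains to bound this oscillation, and the only subtle point is the behavior of $h$ near $0$ and $1$, where $h'(x)=\lg\frac{1-x}{x}$ is unbounded. For a bin whose interval is contained in $[1/k,1-1/k]$ the mean value theorem gives oscillation at most $\tfrac1k\max_{x\in I_b}|h'(x)|\le\tfrac1k\lg(k-1)\le\tfrac{\lg k}{k}$. For the two extreme bins $b=0$ and $b=k-1$ (bin $k$ contributes nothing, since all its symbols have $p(0|y)=1$), monotonicity of $h$ on $[0,1/2]$ and the symmetry $h(x)=h(1-x)$ reduce the oscillation to $h(1/k)$, which I bound directly: $h(1/k)=\tfrac1k\lg k+\tfrac{k-1}{k}\lg\tfrac{k}{k-1}\le\tfrac{\lg k+\lg e}{k}\le\tfrac{2\lg k}{k}$ once $k$ exceeds a small absolute constant (smaller $k$ being degenerate or yielding the trivial bound $2\lg(k)/k\ge1\ge H(\widetilde W)$). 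Taking the maximum over all bins yields $H(\widetilde W)-H(W)\le 2\lg(k)/k$. The main obstacle is precisely this endpoint estimate: near $0$ and $1$ no uniform Lipschitz or second-derivative bound on $h$ is available, so one must instead exploit that $h(1/k)$ is itself only $\O(\lg(k)/k)$.
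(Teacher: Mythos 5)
Your proof is correct and follows essentially the same route as the paper's: both bound the per-bin change in $h$ via $|h'|\le \lg k$ on $[1/k,1-1/k]$ for the interior bins and the direct estimate $h(1/k)\le 2\lg(k)/k$ for the two extreme bins (with bin $k$ trivial). Your packaging of the lower bound as data processing and of the upper bound as $I(X;Y\mid\widetilde Y)$ is just tidier bookkeeping for the same per-bin comparison the paper carries out via concavity of $h$ and the observation that $p(0\mid\tilde y)$ is a weighted average of the $p(0\mid y)$ in its bin.
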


We will delay the proof of Proposition \ref{p:approx-z} to Appendix \ref{s:output-symbol-binning}, as the details are somewhat mechanical. We note that a slightly different binning strategy~\cite{tal-vardy} can achieve an approximation error of $O(1/k)$. We chose to employ a simple variant that still works for our purposes.

We will iteratively use the binning algorithm underlying Proposition \ref{p:approx-z} to select the best channels. The following corollary formalizes this.
\begin{corollary}
  \label{c:iterated-binning}
  Let $\widehat{\wni}$ indicate the result of using Algorithm \ref{alg:binning} after every application of the evolution Equations \eqref{eq:evolution}; that is, \[
    \widehat{\wni} = 
    \widetilde{
      \smash{
        \widetilde{
          \widetilde{W^+}^-
        }^{\iddots}
      }
      \rule{0pt}{1.8em}
    }
    \]
    where the $+$ or $-$ is chosen depending on the corresponding bit, starting from the least significant one, of the binary representation of $i \in \{0,1,\dots,2^n-1\}$. Then
    \[
      H(\wni) \le H\left(\widehat{\wni}\right) \le H(\wni) + \frac{2^{n+2}\lg(k)}{k} \ .
    \]
\end{corollary}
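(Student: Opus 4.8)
The plan is to prove the two inequalities together by induction on the number of polarization steps, carrying the \emph{stronger} invariant that at every level the binned channel is a degraded version of the true channel (not merely one of higher entropy), together with an explicit bound on the entropy gap. Write the binary digits of $i$, least significant first, as $b_1,\dots,b_n$; let $\pm_j$ be the transform applied at level $j$ ($-$ if $b_j=0$, $+$ if $b_j=1$), and set $U_0=V_0=W$, $U_j=U_{j-1}^{\pm_j}$, $V_j=\widetilde{V_{j-1}^{\pm_j}}$, so that $U_n=\wni$ and $V_n=\widehat{\wni}$ by \eqref{eq:evolution} and the definition in the corollary. I would prove by induction on $j$ that $V_j$ is a (stochastically) degraded version of $U_j$ and that $0\le H(V_j)-H(U_j)\le (2^j-1)\cdot\tfrac{2\lg k}{k}$; the case $j=n$ gives the corollary, since $2^n-1<2^{n+1}\le 2^{n+2}$.

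First I would record two structural facts. (i) The merging step of Algorithm~\ref{alg:binning} sends $y\mapsto \tilde y_{\lfloor k\,p(0|y)\rfloor}$, i.e.\ it composes the channel with a deterministic output map, so $\widetilde V$ is a degraded version of $V$ for every $V$; since degradation is transitive, it suffices in the inductive step to track it across the transform and then across the binning. (ii) Degradation is preserved by both polar transforms: if $V'=V\circ P$ for an output channel $P$, then a one-line calculation from \eqref{eq:w-} and \eqref{eq:w+} gives $V'^-=V^-\circ(P\otimes P)$ and $V'^+=V^+\circ Q$, where $Q$ applies $P$ to each of $y_1,y_2$ and leaves $x_1$ untouched; hence $V'^{\pm_j}$ is a degraded version of $V^{\pm_j}$. (Both facts are standard; cf.\ \cite{tal-vardy,korada-thesis}.) Combining (i), (ii), transitivity, and the inductive hypothesis that $V_{j-1}$ is degraded w.r.t.\ $U_{j-1}$, we conclude that $V_j$ is degraded w.r.t.\ $U_j$; in particular, by the data-processing inequality, $H(V_j)\ge H(U_j)$, which already yields the lower bound $H(\wni)\le H(\widehat{\wni})$.

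The crux is controlling the \emph{growth} of the entropy gap $\delta_j:=H(V_j)-H(U_j)$. Because mutual information is conserved, \eqref{eq:I-conserved} is equivalent to $H(X^-)+H(X^+)=2H(X)$ for every channel $X$; applying this to $X=V_{j-1}$ and $X=U_{j-1}$ and subtracting gives
\[
\bigl(H(V_{j-1}^-)-H(U_{j-1}^-)\bigr)+\bigl(H(V_{j-1}^+)-H(U_{j-1}^+)\bigr)=2\delta_{j-1}.
\]
By fact (ii) and data processing each summand on the left is nonnegative, so each is at most $2\delta_{j-1}$; in particular $H(V_{j-1}^{\pm_j})-H(U_{j-1}^{\pm_j})\le 2\delta_{j-1}$. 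Binning $V_{j-1}^{\pm_j}$ to obtain $V_j$ then adds at most $2\lg k/k$ to the entropy by Proposition~\ref{p:approx-z}, so
\[
\delta_j=\bigl(H(V_j)-H(V_{j-1}^{\pm_j})\bigr)+\bigl(H(V_{j-1}^{\pm_j})-H(U_j)\bigr)\le \frac{2\lg k}{k}+2\delta_{j-1},
\]
and $\delta_j\ge 0$ since both brackets are nonnegative. With $\delta_0=0$ this recursion unrolls to $\delta_j\le (2^j-1)\tfrac{2\lg k}{k}$, closing the induction; for $j=n$ this is at most $2^{n+1}\lg k/k\le 2^{n+2}\lg k/k$, as claimed.

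The main obstacle is conceptual rather than computational: one cannot propagate only the entropy gap through the induction, because $H(\text{channel}^{\pm})$ is not a function of $H(\text{channel})$ alone, so after each transform the gap could in principle blow up uncontrollably. The fix is to carry the degradation relation itself, which lets the conservation identity \eqref{eq:I-conserved} pin the per-step amplification factor at exactly $2$ (degradation forces both ``halves'' of the gap to be nonnegative, so neither can exceed their sum). The remaining pieces — that output-symbol merging degrades a channel, that degradation survives the $^{+}$ and $^{-}$ transforms, and that degradation raises conditional entropy — are routine, though I would spell the first two out explicitly from \eqref{eq:w-}–\eqref{eq:w+} for completeness.
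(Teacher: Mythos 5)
Your proof is correct, but it takes a genuinely different route from the paper's for the upper bound. The paper argues globally: it sums the errors $H(\widehat{\wni}) - H(\wni)$ over all $2^n$ indices at once, uses the conservation identity $H(V^+)+H(V^-)=2H(V)$ to see that the only growth in this total comes from the binning steps (at most $2^j\cdot 2\lg k/k$ added at level $j$), and then — because every individual error is nonnegative — bounds each single error by the total $\tfrac{2\lg k}{k}(2+2^2+\cdots+2^n)\le \tfrac{2^{n+2}\lg k}{k}$. You instead follow a single path and bound the per-step amplification of the gap $\delta_j$ by a factor of $2$, which forces you to carry the degradation relation through the induction so that both halves of the split gap are nonnegative and hence each is at most their sum $2\delta_{j-1}$. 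Both arguments hinge on the same conservation identity \eqref{eq:I-conserved}, and your recursion even yields the marginally sharper constant $2^{n+1}$. A further benefit of your version: it makes the lower bound $H(\wni)\le H(\widehat{\wni})$ fully rigorous. The paper dismisses that direction as obvious because each application of $\widetilde{\,\cdot\,}$ raises entropy, but comparing $\widehat{\wni}$ with $\wni$ across the interleaved transforms is not a pointwise comparison of a channel with its own binning — one needs that the entropy ordering survives $^+$ and $^-$, which entropy alone does not guarantee; the degradation invariant you maintain (binning is a deterministic output map, and degradation commutes with both transforms) is exactly the standard way to close that gap.
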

\begin{proof}
  The lower bound is obvious as the operation $\widetilde{\cdot}$ never decreases the entropy of the channel, as mentioned in the proof of Proposition \ref{p:approx-z}.

  For the upper bound, we'd like to consider the error expression summed over all $\wni$:
  \begin{dmath}
    \label{eq:error}
    \sum_{i=0}^{2^n-1} H\left( \widehat{\wni} \right) - \sum_{i=0}^{2^n-1} H(\wni) = \sum_{i=0}^{2^n-1} H\left( \widehat{\wni} \right) - 2^n H(W)
  \end{dmath}
  as $\E_{b \in \{+, -\}} H(W^b) = H(W)$ by \eqref{eq:I-conserved}. At every approximation stage, we have, from Proposition \ref{p:approx-z},
  \begin{dmath*}
    H\left(\widetilde{\widehat{W_m^{(\lfloor i/2\rfloor)}}^+}\right)
    +
    H\left(\widetilde{\widehat{W_m^{(\lfloor i/2\rfloor)}}^-}\right)
    \le 
    2\left( 
    H\left( \widehat{W_m^{(\lfloor i/2\rfloor)}}\right) + \frac{2 \lg k}{k}
    \right) 
    \ .
  \end{dmath*}
  Applying this to every level of the expression \eqref{eq:error} (colloquially speaking, we strip off the $\widetilde{\hspace{1em}}$s $n$ times), obtain
  \begin{align*}
    \sum_{i=0}^{2^n - 1} H\left( \widehat \wni \right) - 2^n H(W) &\le \frac{2 \lg k}{k} (2 + 2^2 + \cdots + 2^n) \\
    &\le \frac{2^{n+2} \lg k}{k} \ .
  \end{align*}
   Since the sum of all of the errors $H\left( \widehat{\wni} \right) - H(\wni)$ is upper bounded by $\frac{2^{n+2} \lg k}{k}$, each error is also upper bounded by $\frac{2^{n+2} \lg k}{k}$ (since no error is negative due to the lower bound).
\end{proof}

We are now in a position to restate and prove our main theorem (Theorem \ref{thm:main-intro}).

\begin{theorem*}
There is an absolute constant $\mu < \infty$ such that the following
holds.  Let $W$ be a binary-input output-symmetric memoryless channel with
capacity $I(W)$. Then there exists $a_W < \infty$ such that for all
$\eps > 0$ and all powers of two $N \ge a_W (1/\eps)^\mu$, there is a
deterministic $\mathrm{poly}(N)$ time construction of a binary linear
code of block length $N$ and rate at least $I(W)-\eps$ and a
deterministic $\O(N \log N)$ time decoding algorithm with block
error probability at most $2^{-N^{0.49}}$.
\end{theorem*}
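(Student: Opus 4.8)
The plan is to assemble the theorem from the two polarization results (Propositions~\ref{p:bsc-rough} and \ref{p:fine-polarization}), the decoding bound (Lemma~\ref{l:sc-prob}), and the efficient-construction machinery (Proposition~\ref{p:approx-z} and Corollary~\ref{c:iterated-binning}). The overall strategy is a two-stage construction: use the binning algorithm to \emph{algorithmically} identify a large set of roughly polarized channels at level $m = \Theta(\log 1/\eps)$, and then use the \emph{explicit combinatorial description} of the good indices from the proof of Proposition~\ref{p:fine-polarization} (the ones whose binary suffix lies in the set $G(n)$) to pin down a frozen set at level $n_0$ without needing to estimate Bhattacharyya parameters all the way down.

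First I would fix $\mu$ and set up parameters. Given $\eps$, pick $\delta$ slightly below $0.49$ (say $\delta$ with $2^{\delta n_0} \ge n_0^{0.49}$ eventually, so that $2^{-2^{\delta n_0}} \le 2^{-N^{0.49}}$; here $N = 2^{n_0}$ and $n_0 = \log_2 N$), and let $n_0 = \log_2 N$. Since $N \ge a_W(1/\eps)^\mu$ we have $n_0 \ge \mu \log_2(1/\eps) + \log_2 a_W$, which for $\mu$ large enough is at least $c_\delta \log(1/\eps)$ as required by Proposition~\ref{p:fine-polarization}; this is exactly where $\mu$ gets chosen (it must dominate the constant $c_\delta$, which via its definition unwinds to $b_\rho$, $\theta(\beta,\gamma,\rho)$, etc.). Split $n_0 = m + n$ with $m = n_0/(1+\gamma)$ as in that proof.

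Next, the construction. Run the iterated binning of Corollary~\ref{c:iterated-binning} for $m$ levels with bin count $k = \poly(N)$ chosen so that the total additive entropy error $2^{m+2}\lg(k)/k$ is at most, say, $\rho^m$ (this needs $k$ polynomially large in $N$, since $2^m \le N$ and $\rho^m \ge N^{-O(1)}$ — this is the source of the "inverse polynomial" limitation mentioned in the introduction, and why we cannot push binning past level $m$). Using $Z(W) \le \sqrt{H(W)}$, the estimated channels $\widehat{W_m^{(i)}}$ with small estimated entropy have true $Z(W_m^{(i)}) \le \sqrt{3\rho^m}$, so they form a set $\widetilde{\W_r}$ in the sense of Corollary~\ref{c:looser}; by Proposition~\ref{p:bsc-rough} (entropy version) this set has density $\ge I(W) - \eps/2$ among the $2^m$ indices, and it is computed in $\poly(N)$ time. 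Then, exactly as in the proof of Proposition~\ref{p:fine-polarization} (now invoking Corollary~\ref{c:looser} for the looser $\sqrt{3\rho^m}$ bound), declare $i \in \{0,\dots,2^{n_0}-1\}$ \emph{good} iff its top $m$ bits index a channel in $\widetilde{\W_r}$ and its bottom $n$ bits lie in the explicitly-defined combinatorial set $G(n)$ of Equation~\eqref{eq:G}; membership in $G(n)$ is a simple counting predicate on the bits of $i$, checkable in $\poly(n)$ time per index. The frozen set $F$ is the complement. Correctness of the rate: $\Pr_i[i \text{ good}] \ge (I(W)-\eps/2)(1 - \eps/2) \ge I(W) - \eps$ by the union bound as in that proof, so $|F| \le (1 - I(W) + \eps)N$ and the rate is at least $I(W) - \eps$.

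Finally, the error bound and decoding. For every good $i$, Corollary~\ref{c:looser} / the computation \eqref{eq:lg(Z(M))} gives $Z(W_{n_0}^{(i)}) \le 2^{-2^{\beta n}} = 2^{-2^{\delta n_0}}$. By Lemma~\ref{l:sc-prob}, the block error probability of the successive cancellation decoder on $C_F$ is at most $\sum_{i \notin F} Z(W_{n_0}^{(i)}) \le N \cdot 2^{-2^{\delta n_0}} \le 2^{-N^{0.49}}$ for $N$ large (absorbing the $N$ factor since $2^{\delta n_0}$ is polynomially larger than $n_0 = \log N$ — this forces the $0.49 < \delta < 0.5$ slack and a possible further mild increase of $a_W$). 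The decoder is Ar\i kan's, running in $\O(N\log N)$ time as discussed after Algorithm~\ref{alg:sc-decoder}, and the generator matrix is the submatrix of $K^{\otimes n_0}B_{n_0}$ on the non-frozen columns. The main obstacle — really the only delicate point — is bookkeeping the chain of constants so that a \emph{single} absolute $\mu$ works for all channels $W$: one must check that $c_\delta$, and hence the required lower bound on $n_0$, depends only on $\delta$ (absorbed into the absolute $\mu$) and not on $W$, with all $W$-dependence confined to the additive constant $\log_2 a_W$; this is immediate from inspecting Propositions~\ref{p:bsc-rough} and \ref{p:fine-polarization}, whose constants ($\Lambda$, $b_\rho$, $\theta(\beta,\gamma,\rho)$, $c_\delta$) are universal.
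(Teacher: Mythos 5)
Your proposal matches the paper's proof essentially step for step: iterated binning up to level $m$ with a polynomially large bin count to algorithmically identify a superset of the roughly polarized indices, the explicit combinatorial set $G(n)$ from the fine-polarization proof for the remaining $n$ levels, Corollary~\ref{c:looser} to get $Z(W_{n_0}^{(i)}) \le 2^{-2^{\delta n_0}}$ on the good indices, and Lemma~\ref{l:sc-prob} plus a union/product bound over the $N$ good indices to finish, with $\mu$ absorbed into the universal constant $\widetilde{c_\delta}$. The only blemishes are cosmetic: you want $\delta$ slightly \emph{above} $0.49$ (the paper takes $\delta = 0.499$; your later parenthetical gets this right), and the paper's convention attaches the first $m$ polarization steps to the \emph{least} significant bits of $i$ rather than the top bits --- neither affects the argument.
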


\begin{proof}
  Fix an $N$ that is a power of $2$, and let $n_0 = \lg(N)$. Define $m, n, \rho$ as they are defined in the proof of Proposition \ref{p:fine-polarization}. Utilizing the definition of $\ \widehat{\cdot}\ $ from Corollary \ref{c:iterated-binning} with $k = \left( \frac {2} {\rho} \right)^{2m}$, let $\widehat{\W_r}$ be the set of all channels $W_m^{(i)}$ such that $H\left(\widehat{W_m^{(i)}}\right) \le 3 \rho^m$, and let $\widehat{R}(m)$ be the set of corresponding indices $i$. Define the complement of the frozen set
  \[
    \overline{\hat F_{n_0}} =
     \left\{ 
       i ~ \left| ~ 
         \begin{aligned}
                  &0 \le i \le 2^{n_0} - 1, \\
                  &i_{0}^{m-1} \in \hat R(m),\\
                  &i_{m}^{n_0 - 1} \in G(n_0 - m) 
         \end{aligned}
         \right.
     \right\}
  \]
  where $G(n)$ is defined in Equation \ref{eq:G} and the notation $i_j^k = i / 2^j \bmod 2^{k-j+1}$ means the integer with the binary representation of the $j$th through $k$th bits of $i$, inclusive. We note that this set $\overline{\hat F_{n_0}}$ is computable in $\poly(1/\eps, N)$ time: $\widehat R(m)$ is computable in $\poly(1/\eps)$ time because $k \le \poly(1/\eps)$ and $G(n_0 - m)$ is computable in $\O(N)$ time as it is just counting the number of $1$ bits in various intervals.

  By Corollary \ref{c:iterated-binning} we can conclude that $i \in R(m)$ implies $i \in \hat R(m)$ because $Z(\wmi) \le 2\rho^m$ implies $H(\wmi) \le Z(\wni) \le 2\rho^m$. This in turn implies $H(\widehat{\wmi}) \le 3 \rho^m$ by our choice of $k$ and the approximation error guaranteed by Corollary \ref{c:iterated-binning}. Therefore, we have
  \[
    \Pr_{i < 2^m} (i \in \hat R(m)) \ge \Pr_{i < 2^m} (i \in R(m))
  \]
  and also that all $M \in \widehat{\W_r}$ satisfy
  $Z(M) \le \sqrt{H(M)}  \le \sqrt{3 \rho^m}$, where the former inequality is from \eqref{eq:ZversusH}.
 
  Applying Corollary \ref{c:looser} with our modified set $\widehat{\W_r}$, we can now conclude 
  $\Pr(i \in \overline{\hat F_{n_0}}) \ge I(W) - \epsilon$ 
  and
  $Z(W_{n_0}^{(i)}) \le 2^{-2^{\delta n_0}}$ 
  for all $i$ in $\overline{\hat F_{n_0}}$. This implies that 
  \[
    \sum_{i \in \overline{\hat F_{n_0}}} Z(\wni) \le N2^{-N^\delta} \ .
  \]
  Taking $\delta = .499$ and $\mu = \widetilde{c_\delta}$, we can conclude the existence of 
  an $a_W$ such that for $N \ge a_W (1/\epsilon)^\mu$, 
  \[
    \sum_{i \in \overline{\hat F_{n_0}}} Z(\wni) \le 2^{-N^{.49}}, 
    \]
    as such $\mu$ satisfies the conditions of Corollary \ref{c:looser}. The proof is now complete since by Lemma \ref{l:sc-prob}, the block error probability of polar codes with a frozen set $F$ under successive cancellation decoding is bounded by the sum of the Bhattacharyya parameters of the channels not in $F$.
\end{proof}

\section{Future work}
The explicit value of $\mu$ found in Theorem \ref{thm:main-intro} is a large constant and far from the empirically suggested bound of approximately $4$. Tighter versions of this analysis should be able to minimize the difference between the upper bound suggested by Theorem \ref{thm:main-intro} and the available lower bounds.

We hope to extend these results shortly to channels with non-binary input alphabets, utilizing a decomposition of channels to prime input alphabet sizes \cite{polar-prime-alphabets}. Another direction is to study the effect of recursively using larger $\ell \times \ell$ kernels instead of the $2 \times 2$ matrix $K =\left( \begin{smallmatrix} 1 & 1 \\ 0 & 1 \end{smallmatrix} \right)$. Of course in the limit of $\ell \to \infty$, by appealing to the behavior of random linear codes we will achieve $\mu \approx 2$, but the decoding complexity will grow as $2^\ell$. The trade-off between $\mu$ and $\ell$ for fixed $\ell > 2$ might be interesting to study.

\section*{Acknowledgments}
We thank Seyed Hamed Hassani, Eren \c{S}a\c{s}o\u{g}lu, Madhu Sudan, Ido Tal, R\"{u}diger Urbanke, and Alexander Vardy for useful discussions and comments about the write-up.


\begin{thebibliography}{10}

\bibitem{AMRU-scaling}
Abdelaziz Amraoui, Andrea Montanari, Thomas~J. Richardson, and R{\"u}diger~L.
  Urbanke.
\newblock Finite-length scaling for iteratively decoded {LDPC} ensembles.
\newblock {\em IEEE Transactions on Information Theory}, 55(2):473--498, 2009.

\bibitem{arikan-polar}
Erdal Ar{\i}kan.
\newblock Channel polarization: A method for constructing capacity-achieving
  codes for symmetric binary-input memoryless channels.
\newblock {\em IEEE Transactions on Information Theory}, pages 3051--3073, July
  2009.

\bibitem{arikan-source}
Erdal Ar{\i}kan.
\newblock Source polarization.
\newblock In {\em Proceedings of 2010 IEEE International Symposium on
  Information Theory}, pages 899--903, 2010.

\bibitem{arikan-telatar-arxiv}
Erdal Ar{\i}kan and Emre Telatar.
\newblock On the rate of channel polarization.
\newblock {\em CoRR}, abs/0807.3806, 2008.

\bibitem{arikan-telatar}
Erdal Ar{\i}kan and Emre Telatar.
\newblock On the rate of channel polarization.
\newblock In {\em Proceedings of 2009 IEEE International Symposium on
  Information Theory}, pages 1493--1495, 2009.

\bibitem{BJE10}
Mayank Bakshi, Sidharth Jaggi, and Michelle Effros.
\newblock Concatenated polar codes.
\newblock In {\em ISIT}, pages 918--922. IEEE, 2010.

\bibitem{sasoglu-book}
Eren \c{S}a\c{s}o\u{g}lu.
\newblock Polarization and polar codes.
\newblock {\em Foundations and Trends in Communications and Information
  Theory}, 8(4):259--381, 2012.

\bibitem{polar-prime-alphabets}
Eren \c{S}a\c{s}o\u{g}lu, Emre Telatar, and Erdal Ar{\i}kan.
\newblock Polarization for arbitrary discrete memoryless channels.
\newblock {\em CoRR}, abs/0908.0302, 2009.

\bibitem{forney}
G.~David Forney.
\newblock {\em Concatenated codes}.
\newblock PhD thesis, Massachusetts Institute of Technology, 1967.

\bibitem{GHU-isit12}
Ali Goli, Seyed~Hamed Hassani, and R\"{u}diger Urbanke.
\newblock Universal bounds on the scaling behavior of polar codes.
\newblock In {\em Proceedings of 2012 IEEE International Symposium on
  Information Theory}, pages 1957--1961, 2012.

\bibitem{Gur-ldpc-survey}
Venkatesan Guruswami.
\newblock Iterative decoding of low-density parity check codes.
\newblock {\em Bulletin of the EATCS}, 90:53--88, 2006.

\bibitem{hassani-thesis}
Seyed~Hamed Hassani.
\newblock {\em Polarization and Spatial Coupling: Two Techniques to Boost
  Performance}.
\newblock PhD thesis, \'Ecole Polytechnique F\'ed\'erale De Lausanne, 2013.

\bibitem{hassani-urbanke-i}
Seyed~Hamed Hassani and R\"{u}diger~L. Urbanke.
\newblock On the scaling of polar codes: I. the behavior of polarized channels.
\newblock In {\em ISIT}, pages 874--878, 2010.

\bibitem{hoeffding}
Wassily Hoeffding.
\newblock Probability inequalities for sums of bounded random variables.
\newblock {\em Journal of the American Statistical Association}, 58(301):pp.
  13--30, 1963.

\bibitem{justesen}
Jorn Justesen.
\newblock A class of constructive asymptotically good algebraic codes.
\newblock {\em IEEE Transactions on Information Theory}, 18(5):652--656, 1972.

\bibitem{korada-thesis}
Satish~Babu Korada.
\newblock {\em Polar Codes for Channel and Source Coding}.
\newblock PhD thesis, \'Ecole Polytechnique F\'ed\'erale De Lausanne, 2009.

\bibitem{KMTU}
Satish~Babu Korada, Andrea Montanari, Emre Telatar, and R\"{u}diger~L. Urbanke.
\newblock An empirical scaling law for polar codes.
\newblock In {\em ISIT}, pages 884--888, 2010.

\bibitem{KRU-spatial-ldpc}
Shrinivas Kudekar, Tom Richardson, and R\"{u}diger~L. Urbanke.
\newblock Spatially coupled ensembles universally achieve capacity under belief
  propagation.
\newblock In {\em ISIT}, pages 453--457, 2012.

\bibitem{PHTT}
Ramtin Pedarsani, Seyed~Hamed Hassani, Ido Tal, and Emre Telatar.
\newblock On the construction of polar codes.
\newblock In {\em ISIT}, pages 11--15, 2011.

\bibitem{shpilka12}
Amir Shpilka.
\newblock Capacity achieving multiwrite {WOM} codes.
\newblock {\em CoRR}, abs/1209.1128, 2012.

\bibitem{strassen}
Volker Strassen.
\newblock Asymptotische {A}bschatzungen in {S}hannon's {I}nformationstheories.
\newblock In {\em Trans. 3rd Prague Conf. Info. Theory}, pages 689--723, 1962.

\bibitem{tal-vardy}
Ido Tal and Alexander Vardy.
\newblock How to construct polar codes.
\newblock {\em CoRR}, abs/1105.6164v3, 2013.

\bibitem{wolfowitz}
Jacob Wolfowitz.
\newblock The coding of messages subject to chance errors.
\newblock {\em Illinois J. Math.}, 1:591--606, 1957.

\end{thebibliography}
\appendix
\section{Proofs of $Z$-parameter evolution equations}
\label{app:Z-evolve}
The $Z$-parameter evolution equations are a special case of the lemmas in \cite{korada-thesis}, specifically in the appendices to Chapters 2 and 3, and the proof techniques used here are based on the proofs of those lemmas.

\begin{proof}[Proof of Proposition \ref{p:zplus}]
  This can be done directly by definition. Let $\Y$ be the output alphabet of $W_n$. Then
\begin{align*}
  Z(W_n^+) &\triangleq \sum_{y \in \B \times \Y^2} \sqrt{W_n^+ (y | 0) W_n^+(y | 1)}\\
  &= \frac 12 \sum_{x \in \B, y_1, y_2 \in \Y} \sqrt{ W_n (y_1 | x \oplus 0) W_n(y_2 |0) W_n (y_1|x \oplus 1) W_n(y_2 | 1)}\\
  &= \frac 12 \sum_{x \in \B, y_1 \in \Y} \sqrt{W_n(y_1 | x) W_n(y_1 | x \oplus 1)} \sum_{y_2 \in \Y} \sqrt{W_n(y_2 | 0) W_n(y_2 | 1)}\\
  &= \sum_{y_1 \in \Y} \sqrt{W_n (y_1 | 0) W_n(y_1 | 1)} \sum_{y_2 \in \Y} \sqrt{W_n (y_2 | 0) W_n(y_2 | 1)}\\
  &\triangleq Z(W_n)^2
\end{align*}
where the first step is the expansion of the definition of $W_n^+$ and the rest is arithmetic. 
\end{proof}

\begin{proof}[Proof of Proposition \ref{p:zminus}]
We first show $Z(W_n^-) \le 2Z(W_n) - Z(W_n)^2$. Again, let $\Y$ be the output alphabet of $W_n$. Then we have
\begin{align}
  Z(W_n^-) &\triangleq \sum_{y_1, y_2 \in \Y} \sqrt{W_n^-(y_1, y_2 | 0) W_n^-(y_1, y_2 |1)}\\
  \nonumber &= \frac12 \sum_{y_1, y_2 \in \Y} \sqrt{
    \sum_{x_1 \in \B} W_n(y_1 | x_1) W_n(y_2 | x_1)
    \sum_{x_2 \in \B} W_n(y_1 | 1 \oplus x_2) W_n(y_2 | x_2)
  }\\
  \nonumber &= \frac12 \sum_{y_1, y_2 \in \Y} 
  \sqrt{(W_n(y_1|0)W_n(y_1|1))}
  \sqrt{(W_n(y_2|0)W_n(y_2|1))}\\
  &\hspace{3em}
  \sqrt{
    \frac{W_n(y_1|0)}{W_n(y_1|1)} + 
    \frac{W_n(y_2|0)}{W_n(y_2|1)} +
    \frac{W_n(y_1|1)}{W_n(y_2|0)} +
    \frac{W_n(y_2|1)}{W_n(y_2|0)}
  }
  \intertext{and we note that we can define a probability mass function $p(y) = \frac{\sqrt{W_n(y|0)W_n(y|1)}}{Z(W_n)}$ over $\Y$, so we write}
  \nonumber &= \frac{Z(W_n)^2}{2} \sum_{y_1, y_2 \in \Y} p(y_1) p(y_2) 
  \sqrt{
    \frac{W_n(y_1|0)^2 + W_n(y_1|1)^2}{W_n(y_1|0)W_n(y_1|1)} + 
    \frac{W_n(y_2|0)^2 + W_n(y_2|1)^2}{W_n(y_2|0)W_n(y_2|1)}
  }
  \intertext{and introducing $f(y) = \sqrt{W_n(y|0)/W_n(y|1)} + \sqrt{W_n(y|1)/W_n(y|0)}$, we can write}
  \label{eq:E(f)} &= \frac{Z(W_n)^2}{2} \E_{y_1, y_2 \sim p(y)} \sqrt{f(y_1)^2 + f(y_2)^2 - 4}\\
  \nonumber &\le \frac{Z(W_n)^2}{2} \left( \E(f(y_1)) + \E(f(y_2)) - 2 \right) \quad 
  \text{using $\sqrt{a+b-c} \le \sqrt{a} + \sqrt{b} - \sqrt{c}$ when $a,b \ge c$} \\
  \intertext{and since $\E_{y_1 \sim p(y)} [ f(y_1)] = 2/Z(W_n)$,}
  \nonumber &= 2Z(W_n) - Z(W_n)^2
 \end{align}
 We note that $p(y) = 0$ for all $y$ where either $W_n(y|0)$ or $W_n(y|1)$ is zero, so the expressions involving $f(y)$ are well-defined even if $f(y)$ is not defined for all $y$. 
In the case that $W$ is a binary erasure channel, the expression \eqref{eq:E(f)} can be simplified to obtain a tight bound. If $y$ is an erasure symbol, then $f(y) = 2$, and otherwise, $p(y) = 0$. This means that we simply have
 \[
   \E_{y_1, y_2 \sim p(y)} \sqrt{f(y_1)^2 + f(y_2)^2 - 4} = \E_{y \sim p(y)} f(y)
 \]
 and the equality follows. \qedhere
\end{proof}

\section{Rough polarization for erasure channels}
\label{s:binary-erasure-rough}
If $W$ is the binary erasure channel, we have $I(W_n) = 1 - Z(W_n)$ and $Z(W_n^-) = 2Z(W_n) - Z(W_n^2)$. In this case, we can show the following. 

\begin{proposition}
  For the binary erasure channel $W$, for all $\alpha \in (3/4, 1)$, there exists a constant $c_\alpha$ such that for all $\eps > 0$ and $m \ge c_\alpha \log(1/\epsilon)$ we have
  \[
    \Pr_i \left(Z(\wmi) \le 2 \alpha^m \right) \ge I(W) - \epsilon.
  \]
  \label{p:bec-rough}
\end{proposition}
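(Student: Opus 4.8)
The plan is to follow the structure of the proof of Proposition~\ref{p:bsc-rough}, but to exploit the \emph{exact} evolution equations available for the erasure channel so as to work directly with the Bhattacharyya parameter rather than the entropy (thereby sidestepping the entropy inequality from \cite{sasoglu-book}). Recall that for the BEC we have $Z(W^+) = Z(W)^2$ and $Z(W^-) = 2Z(W)-Z(W)^2$ exactly (Propositions~\ref{p:zplus} and~\ref{p:zminus}), together with $I(W_n) = 1 - Z(W_n)$. Set $S(W) = Z(W)\bigl(1 - Z(W)\bigr)$, the analogue of the symmetric entropy $T(W)$ used in the general case.

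The one substantive step is a per-step contraction $\E_{i \bmod 2}\sqrt{S(W_{n+1}^{(i)})} \le \Lambda\sqrt{S(\wnip)}$ with $\Lambda = \sqrt{3}/2$. Writing $z = Z(\wnip)$ and $u = S(\wnip) = z(1-z)$, the evolution equations give $\sqrt{S(W^+)} = \sqrt{z(1-z)}\,\sqrt{z(1+z)}$ and $\sqrt{S(W^-)} = \sqrt{z(1-z)}\,\sqrt{(1-z)(2-z)}$, so that
\[
  \E_{i \bmod 2}\sqrt{S(W_{n+1}^{(i)})} = \sqrt{u}\cdot\tfrac12\Bigl(\sqrt{z(1+z)} + \sqrt{(1-z)(2-z)}\Bigr).
\]
Squaring the bracketed quantity and substituting $u = z - z^2$ turns it into $2 - 2u + 2\sqrt{u^2 + 2u}$, and the target bound $\tfrac14\bigl(2 - 2u + 2\sqrt{u^2+2u}\bigr) \le \tfrac34$ becomes, after isolating and squaring the radical, simply $4u \le 1$, which always holds since $u = z(1-z)\le 1/4$. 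Thus $\Lambda^2 = 3/4$, and iterating gives $\E_i\sqrt{S(W_m^{(i)})} \le \Lambda^m\sqrt{S(W)} \le \tfrac12\Lambda^m$.

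The remaining steps transcribe the general-channel argument. Markov's inequality yields $\Pr_i[S(W_m^{(i)}) \ge \alpha^m] \le \tfrac12(\Lambda^2/\alpha)^{m/2}$, exactly as in Corollary~\ref{cor:markov-general}. As in Fact~\ref{fact:wni-approx}, the bound $S(W_m^{(i)}) \le \alpha^m$ forces $Z(W_m^{(i)}) \le 2\alpha^m$ or $Z(W_m^{(i)}) \ge 1 - 2\alpha^m$; write $A^g$ and $A^b$ for the two corresponding index sets. Averaging $I(W_m^{(i)}) = 1 - Z(W_m^{(i)})$ and invoking conservation of mutual information gives $I(W) = \E_i I(W_m^{(i)}) \le \Pr(\overline{A^g\cup A^b}) + 2\alpha^m\,\Pr(A^b) + \Pr(A^g)$, where I used $I(W_m^{(i)})\le 1$ off $A^b$ and $I(W_m^{(i)}) \le 2\alpha^m$ on $A^b$. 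Rearranging and inserting the Markov bound,
\[
  \Pr_i\bigl[Z(W_m^{(i)}) \le 2\alpha^m\bigr] \ge \Pr(A^g) \ge I(W) - \tfrac12(\Lambda^2/\alpha)^{m/2} - 2\alpha^m.
\]
Since $\alpha \in (3/4,1) = (\Lambda^2,1)$, both error terms decay geometrically in $m$, so a suitable constant $c_\alpha$ makes the right-hand side at least $I(W) - \epsilon$ whenever $m \ge c_\alpha\log(1/\epsilon)$. I expect no real obstacle: the substitution $u = z(1-z)$ reduces the only nontrivial inequality to $z(1-z)\le 1/4$, and the rest is bookkeeping identical in form to Proposition~\ref{p:bsc-rough}.
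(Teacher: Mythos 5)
Your proof is correct and follows essentially the same route as the paper's: the paper also tracks $Y(W)=Z(W)(1-Z(W))$, proves the same $\sqrt{3}/2$ contraction via the inequality $\sqrt{z(1+z)}+\sqrt{(1-z)(2-z)}\le\sqrt{3}$ (which you verify rather than cite), and finishes with Markov plus an averaging step. The only cosmetic difference is that the paper's final accounting uses $\E_i Z(\wmi)=Z(W)$ to bound $\Pr(A^b)$ directly, whereas you reuse the mutual-information averaging from the general-channel proposition; for the BEC these are equivalent.
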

\begin{proof}
  We can rearrange the evolution Equation \eqref{eq:evolution} and apply Propositions \ref{p:zplus} and \ref{p:zminus} for the BEC case to obtain the equation
\begin{equation}
  Y(W^{(i)}_{n+1}) = Y(\wln) \cdot
    \begin{cases}
      Z(\wln) (1 + Z(\wln)) & i \bmod 2 \equiv 1\\
      (1-Z(\wln))(2-Z(\wln)) & i \bmod 2 \equiv 0
    \end{cases}
  \label{eq:bec-y-evolution}
\end{equation}
Since \[ \sqrt{z(1+z)} + \sqrt{(1-z)(2-z)} \le \sqrt{3} \]
 for all $z \in [0, 1]$ (observed originally by \cite{arikan-telatar-arxiv}), and $Y(W) \le 1/4$, we can conclude the geometrically decaying upper bound $\E_i \Bigl[ \sqrt{Y(W^{(i)}_n)} \Bigr] \le \frac12 \left( \frac{\sqrt{3}}{2} \right)^{n}$. 
Therefore, by Markov's inequality, we have 
\begin{equation}
  \label{bec-markov}
  \Pr_i [Y(\wni) \ge \alpha^n ] \le \frac12 \left( \frac 3 {4 \alpha} \right)^{n/2}.
\end{equation}
We have $\E_i [Z(\wni)] = \E_i [Z(W_{n-1}^{(i)})] = Z(W)$, and so \[ \Pr_i(A_\alpha^b) \min_{i \in A_\alpha^b} Z(\wni) \le \E_i [\wni] = Z(W) \ . \] 
Since $A_\alpha^g \subset A_\alpha$ and $A_\alpha^g$ is disjoint from $A_\alpha^b$, we have $\Pr(A_\alpha^b) = 1 - \Pr(A_\alpha^g) - \Pr(\overline{A_\alpha})$, and we obtain
\begin{equation}
  \Pr(A_\alpha^g) \ge 
  1 - \frac{Z(W)}{\min_{i \in A_\alpha^b} (Z(\wni))} - \Pr(\overline{A_\alpha}) 
  \ge 
  1 - \frac{Z(W)}{1-2\alpha^n} - \frac12 \left( \frac 3 {4 \alpha} \right)^{n/2}
  \label{eq:bec-bound}
\end{equation}
where we have used \eqref{bec-markov} to bound the probability of $\overline{A_\alpha}$ and Fact \ref{fact:wni-approx} to lower bound $\min_{i \in A_\alpha^b} Z(W_n^{(i)})$.

By Fact \ref{fact:wni-approx}, $Z(\wni) \le 2 \alpha^n$ for $i \in A_\alpha^g$. Together with \eqref{eq:bec-bound} we can conclude that for all $\alpha \in (3/4, 1)$, there is some constant $c_\alpha$ such that for all $\eps > 0$ and $m \ge c_\alpha \log (1/\epsilon)$, so that
\[
  \Pr_i[Z(\wmi) \le 2 \alpha^{m}] \ge \Pr(A^g_\alpha) \ge 1 - Z(W) - \epsilon = I(W) - \epsilon. \qedhere
\]
\end{proof}
\section{Analytic Bound on \eqref{eq:h-change-bound}}
\label{s:h-change-bound}
Our objective is to show that the quantity
\begin{equation}
  \Upsilon(x) := \frac{h(2x(1-x)) - h(x)}{h(x)(1-h(x))} \ge \theta
  \label{eq:h-change-bound-reprint}
\end{equation}
for $x \in (0, 1/2)$, for some absolute constant $\theta > 0$.

We will establish this by splitting the analysis into three regimes: $x$ near $0$, $x$ near $1/2$, and $x$ bounded away from the boundaries $0$ and $1/2$. Specifically, we will consider the intervals $(0,a)$, $[a,1/2-a]$, and $(1/2-a,1/2)$ for some absolute constant $a$, $0 < a < 1/4$ (which will be specified in the analysis of the first case when $x$ is close to $0$):

\begin{description}
\item [Case 1: $x$ close to $0$.] Clearly, we have 
\[ \lim_{x \to 0} \frac{h(2x(1-x))}{h(x)} = 2 \ . \]
By continuity, there exists an $a$, $0 < a < 1/4$, such that $\frac{h(2x(1-x))}{h(x)} \ge \frac32$ for $0 < x < a$. This implies $\Upsilon(x) \ge \frac{1}{2 (1-h(x))} \ge \frac12$ for $x \in (0,a)$.

\item [Case 2: $a \le x \le 1/2-a$.] In this interval, we have
\begin{align*}
h(2x(1-x)) - h(x) &= h(x + 2x(1/2-x))-h(x) \\
&\ge h\bigl(x + 2 a (1/2-a)\bigr) - h(x) \quad \mbox{($h(x)$ is increasing on $(0,1/2)$)} \\
&\ge h\bigl(1/2-a + 2 a(1/2-a)\bigr) - h(1/2-a) \quad \mbox{(by concativy of $h(x)$)}
\end{align*}
Thus $\Upsilon(x) \ge  4 \Bigl( h\bigl(1/2-a + 2 a(1/2-a)\bigr) - h(1/2-a)\Bigr)$ and thus at least a positive constant depending on $a$, for $x \in [a,1/2-a]$

\item [Case 3: $x \in (1/2-a,1/2)$.]
For $x$ near $1/2$, we use the following inequality that is valid for $0 \le \gamma \le 1/2$
\begin{equation*}
  1 - 4 \gamma^2 \le h(1/2 - \gamma) \le 1 - 2 \gamma^2 \ .
\end{equation*}
The lower bound is implied by \eqref{eq:ZversusH} for $W = \bsc(p)$ and the upper bound, in fact with a better constant of $2/\ln 2$, follows from the Taylor expansion of $h(x)$ around $x = 1/2$.
Using the above, we have the following lower bound in the range $\xi \in (0,a)$:
\begin{align*}
\Upsilon(1/2-\xi) &= \frac{h(1/2 - 2\xi^2) - h(1/2-\xi)}{h(1/2-\xi)\cdot (1-h(1/2-\xi))} \\
& \ge \frac{(1- 4 (2\xi^2)^2) - (1-2\xi^2)}{ h(1/2-\xi) \cdot 4 \xi^2 } \\
& \ge \frac{2 \xi^2 - 16 \xi^4}{4 \xi^2}  \\
& \ge \frac12 - 4 a^2 \ .
\end{align*}
\end{description}
Thus in all three cases $\Upsilon(x)$ is lower bounded by an absolute positive constant, as desired.

\section{Output Symbol Binning}
\label{s:output-symbol-binning}

\begin{proof}[Proof of Proposition \ref{p:approx-z}]
  First, it is clear that the algorithm runs in time polynomial in $|\Y|$ and $k$; $k$ bits of precision is more than sufficient for all of the arithmetic operations, and the operations are done for each symbol in $\Y$.

For $\tilde y \in \widetilde \Y$, let $I_{\tilde y}$ be the set of $y$ associated with the symbol $\tilde y$; that is, all $y$ such that $p(0|y)$ falls in the interval of $[0,1]$ associated with $\tilde y$ (which is $[j/k, (j+1)/k)$ for $\tilde y = \tilde y_j$).

  For the lower bound, it is clear that $H(W) \le H(\widetilde W)$. Juxtaposing the definitions of $H(W)$ and $H(\widetilde{W})$ together, obtain (defining the binary entropy function $h(x) = -x \lg x - (1-x) \lg (1-x)$):
  \begin{align*}
  H(W)
  &= 
  \sum_{y \in Y} p(y) h(p(0|y))\\
  &\le 
  \sum_{\tilde y \in \tilde \Y} \left( \sum_{y \in I_{\tilde y}} p(y) \right)\left( h(p(0|\tilde y))) \right) 
  = H(\widetilde W)
\end{align*}
where the inequality is due to the concavity of $h(x)$.

Using the fact $\min_i a_i / b_i \le \sum_i a_i / \sum_i b_i \le \max_i a_i / b_i$, we can bound
\[
  p(0|\tilde y) = \frac{p(\tilde y|0)p(0)}{p(\tilde y)} = \frac12 \frac{\sum_{y \in I_{\tilde y}} p(y|0)}{\sum_{y \in I_{\tilde y}} p(y)}
\]
with the expressions
\[
  \min_{y \in I_{\tilde y}} p(0|y) \le p(0|\tilde y) \le \max_{y \in I_{\tilde y}} p(0 | y)
\]
which implies, for all $y \in I_{\tilde y}$,
\[
  p(0|\tilde y) - \frac1k \le p(0|y) \le p(0|\tilde y) + \frac1k.
\]

We will need to offer a bound on $h(p(0|\tilde y))$ as a function of $h(p(0|y))$.
$h(x)$ is concave and obeys $|h'(x)| \le \lg k$ if $1/k < x < 1 - 1/k$. Define the ``middle set'' $\tilde y_m = \{ \tilde y_i : 0 < i < k - 1\}$, corresponding with intervals where $p(0 | \tilde y_m)$ is in the range $1/k < x < 1-1/k$. Then, by the concavity of $h(x)$, for all $\tilde y \in \tilde y_m$ and $y \in I_{\tilde y}$, we have $h(p(0|\tilde y)) \le h(p(0|y)) + 2\lg(k)/k$.

We now provide a bound for the remaining symbols $\tilde y_0$, $\tilde y_{k-1}$ and $\tilde y_k$. $\tilde y_k$ is trivial because it represents all symbols where $p(0|y) = 1$, and merging those symbols together still results in $p(0|\tilde y) = 1$. For $\tilde y_0$, we have
\[
  h(p(0|\tilde y)) \le h(1/k) \le 2\lg(k)/k \le h(p(0|y)) + 2\lg(k)/k
\]
and similarly for $\tilde y_{k-1}$.

With these expressions in hand, we can now write
\begin{align*}
  H(\widetilde W) &= \sum_{\tilde y \in \Y} \sum_{y \in I_{\tilde y}} p(y)h(p(0|\tilde y)))\\
  &\le \sum_{\tilde y \in \Y} \sum_{y \in I_{\tilde y}} p(y)(h(p(0|y)) + 2\lg(k)/k)\\
  &\le H(W) + 2\lg(k)/k \ . \qedhere
\end{align*}
\end{proof}
\end{document}